\algrenewcommand\algorithmicrequire{\textbf{Input:}}
\algrenewcommand\algorithmicensure{\textbf{Output:}}
\newcommand{\no}[1]{}
\newcommand{\syma}{\texttt{a}}
\newcommand{\symb}{\texttt{b}}
\newcommand{\symc}{\texttt{c}}
\newcommand{\dol}{\texttt{\$}}
\newcommand{\gexp}{\mathtt{exp}}
\newcommand{\bwt}{\mathtt{bwt}}
\newcommand{\floor}[1]{\lfloor #1 \rfloor}
\newcommand{\deltalog}{\delta\log\frac{n\log\sigma}{\delta\log n}}
\newcommand{\dd}{\mathinner{.\,.}}
\newcommand{\suc}{\texttt{arg\_successor}}
\begin{document}

\title{Iterated Straight-Line Programs\thanks{Funded with Basal Funds FB0001, ANID, Chile; and ANID-Subdirección de Capital Humano/Doctorado Nacional/2021-21210580.}}

\author{Gonzalo Navarro \and Cristian Urbina}

\institute{CeBiB --- Centre for Biotechnology and Bioengineering \\ Departament of Computer Science, University of Chile}

\maketitle

\begin{abstract}
We explore an extension to straight-line programs (SLPs) that outperforms, for some text families, the measure $\delta$ based on substring complexity, a lower bound for most measures and compressors exploiting repetitiveness (which are crucial in areas like Bioinformatics). The extension, called iterated SLPs (ISLPs), allows rules of the form $A \rightarrow \Pi_{i=k_1}^{k_2} B_1^{i^{c_1}}\cdots B_t^{i^{c_t}}$, for which we show how to extract any substring of length $\lambda$, from the represented text $T[1\dd n]$, in time $O(\lambda + \log^2 n\log\log n)$. This is the first compressed representation for repetitive texts breaking $\delta$ while, at the same time, supporting direct access to arbitrary text symbols in polylogarithmic time. As a byproduct, we extend Ganardi et al.'s technique to balance any SLP (so it has a derivation tree of logarithmic height) to a wide generalization of SLPs, including ISLPs.
\keywords{Grammar compression \and Substring complexity \and Repetitiveness measures}
\end{abstract}

\section{Introduction}

Motivated by the data deluge, and by the observed phenomenon that many of the fastest-growing text collections are highly repetitive, recent years have witnessed an increasing interest in (1) defining measures of compressibility that are useful for highly repetitive texts, (2) develop compressed text representations whose size can be bounded in terms of those measures, and (3) provide efficient (i.e., polylogarithmic time) access methods to those compressed texts, so that algorithms can be run on them without ever decompressing the texts \cite{Navacmcs20.3,Navacmcs20.2}. We call {\em lower-bounding measures} those satisfying (1), {\em reachable measures} those (asymptotically) reached by the size of a compressed representation (2), and {\em accessible measures} those reached by the size of representations satisfying (3).

For example, the size $\gamma$ of the smallest ``string attractor'' of a text $T$ is a lower-bounding measure, unknown to be reachable \cite{KP18}, and smaller than the size reached by known compressors. The size $b$ of the smallest ``bidirectional macro scheme'' of $T$ \cite{SS82}, and the size $z$ of the Lempel-Ziv parse of $T$ \cite{LZ76}, are reachable measures. The size $g$ of the smallest context-free grammar generating (only) $T$ \cite{CLLPPSS05} is an accessible measure \cite{BLRSRW15}. It holds $\gamma \le b \le z \le g$ for every text.

One of the most attractive lower-bounding measures devised so far is $\delta$ \cite{RRRS13,CEKNP20}. Let $T[1\dd n]$ be a text over alphabet $[1\dd\sigma]$, and $T_k$ be the number of distinct substrings of length $k$ in $T$, which define its so-called substring complexity. Then the measure is $\delta(T) = \max_k T_k/k$. This measure has several attractive properties: it can be computed in linear time and lower-bounds all previous measures of compressibility, including $\gamma$, for every text. While $\delta$ is known to be unreachable, the measure $\delta'=\deltalog$ has all the desired properties: $\Omega(\delta')$ is the space needed to represent some text family for each $n$, $\sigma$, and $\delta$; within $O(\delta')$ space it is possible to represent every text $T$ and access any length-$\lambda$ substring of $T$ in time $O(\lambda+\log n)$ \cite{KNP22}, together with more powerful operations \cite{KNP22,KNO23,KK23}. 

As for $g$, a {\em straight-line program (SLP)} is a context-free grammar that generates (only) $T$, and has size-2 rules of the form $A \rightarrow BC$, where $B$ and $C$ are nonterminals, and size-1 rules $A \rightarrow \syma$, where $\syma$ is a terminal symbol. The SLP size is the sum of all its rule sizes. A {\em run-length SLP (RLSLP)} may contain, in addition, size-2 rules of the form $A \rightarrow B^k$, representing $k$ repetitions of nonterminal $B$. A RLSLP of size $g_{rl}$ can be represented in $O(g_{rl})$ space, and within that space we can offer fast string access and other operations
\cite[App.~A]{CEKNP20}. It holds $\delta \le g_{rl} = O(\delta')$, where $g_{rl}$ is the smallest RLSLP that generates $T$ \cite{Navacmcs20.3,KNP22} (the size $g$ of the smallest grammar or SLP, instead, is not always $O(\delta')$). 

While $\delta$ lower-bounds all previous measures on every text, $\delta'$ is not the smallest accessible measure. 
In particular, $g_{rl}$ is always $O(\delta')$, and it can be smaller by up to a logarithmic factor. Indeed, $g_{rl}$ is a minimal accessible measure as far as we know. It is asymptotically between $z$ and $g$ \cite{Navacmcs20.3}. An incomparable accessible measure is $z_{end} \ge z$, the size of the LZ-End parse of the text \cite{KN13,KS22}.

The belief that $\delta$ is a lower bound to every reachable measure was disproved by the recently proposed L-systems \cite{NU21,NU23}. L-systems are like SLPs where all the symbols are nonterminals and the derivation ends at a specified depth in the derivation tree. The size $\ell$ of the smallest L-system generating $T[1\dd n]$ is a reachable measure of repetitiveness and was shown to be as small as $O(\delta/\sqrt{n})$ on some text families, thereby sharply breaking $\delta$ as a lower bound. Measure $\ell$, however, is unknown to be accessible, and thus one may wonder whether there exist accessible text representations that are smaller than $\delta$.

In this paper we devise such a representation, which we call {\em iterated SLP (ISLP)}. ISLPs extend SLPs (and RLSLPs) by allowing a more complex version of the rule $A \rightarrow B^k$, namely $A \rightarrow \Pi_{i=k_1}^{k_2} B_1^{i^{c_1}}\cdots B_t^{i^{c_t}}$ of size $2+2t$. We show how to extract a substring of length $\lambda$ from the ISLP of a text $T$ in time $O(\lambda+\log^2 n \log\log n)$ provided the ISLP is balanced, that is, its derivation tree is of height $O(\log n)$.
Just like SLPs and RLSLPs can be balanced \cite{GJL2021,NOU2022} while retaining their asymptotic size, we show how to balance a more general class of SLP extensions we call generalized SLPs (GSLPs). GSLPs, which include ISLPs, allow rules of the form $A \rightarrow x$, where $x$ is a {\em program} that outputs the right-hand side of the rule. We show that, if every nonterminal appearing in $x$'s output does it at least twice, then the GSLP can be balanced in the same way as SLPs. This byproduct of our results can be of independent interest to provide polylogarithmic-time access to other extensions of context-free grammars.

\section{Preliminaries}

We explain some concepts and notation used in the rest of the paper.

\subsubsection{Strings}

Let $\Sigma = [1\dd \sigma]$ be an \emph{alphabet}. A \emph{string} $T[1\dd n]$ of length $n$ is a finite sequence $T[1]\,T[2] \dots T[n]$ of $n$ symbols in $\Sigma$. We denote by $\varepsilon$ the unique string of length $0$. We denote by $\Sigma^*$ the set of all finite strings with symbols in $\Sigma$. The $i$-th symbol of $T$ is denoted by $T[i]$, and the sequence $T[i]\dots T[j]$ is denoted by $T[i\dd j]$. The \emph{concatenation} of $X[1\dd n]$ and $Y[1\dd m]$ is defined as $X \cdot Y = X[1] \dots X[n]\, Y[1] \dots Y[m]$ (we omit the dot when there is no ambiguity). If $T = XYZ$, then $X$ (resp. $Y$, resp. $Z$) is 
 a \emph{prefix} (resp. \emph{substring}, resp. \emph{suffix}) of $T$. A \emph{power} $T^k$ stands for $k$ consecutive concatenations of the string $T$.
 We denote by $|T|_a$ the number of occurrences of the symbol $a$ in $T$. A \emph{string morphism} is a function $\varphi: \Sigma^* \rightarrow \Sigma^*$ such that $\varphi(xy) = \varphi(x) \cdot \varphi(y)$ for any strings $x$ and $y$.

\subsubsection{Straight-Line Programs}

A \emph{straight-line program} (SLP) is a context-free grammar \cite{Sipser2012} that contains only terminal rules of the form $A \rightarrow \syma$ with $\syma \in \Sigma$, and binary rules of the form $A \rightarrow BC$ for variables $B$ and $C$ whose derivations cannot reach again $A$. These restrictions ensure that each variable of the SLP generates a unique string, defined as $\gexp(A) = \syma$ for a rule $A \rightarrow \syma$, and as $\gexp(A) = \gexp(B)\cdot\gexp(C)$ for a rule $A \rightarrow BC$. A \emph{run-length straight-line program} (RLSLP) is an SLP that also admits run-length rules of the form $A \rightarrow B^k$ for some $k \geq 3$, with their expansion defined as $\gexp(A)=\gexp(B)^k$. The {\em size} of an SLP is the sum of the lengths of the right-hand sides of its rules; the size of an RLSLP is defined similary, assuming that rules $A \rightarrow B^k$ are of size 2 (i.e., two integers to represent $B$ and $k$).

The \emph{derivation tree} of an SLP is an ordinal tree where the nodes are the variables, the root is the initial variable, and the leaves are the terminal variables. The children of a node are the variables appearing in the right-hand side of its rule (in left-to-right order). The \emph{height} of an SLP is the length of the longest path from the root to a leaf node in the derivation tree. The height of an RLSLP is obtained by \emph{unfolding} its run-length rules, that is, writing a rule $B^k$ as $BB\dots B$ where $B$ appears $k$ times, to obtain an equivalent SLP (actually, a slight extension where the right-hand sides can feature more than two variables).

SLPs and RLSLPs yield measures of repetitiveness $g$ and $g_{rl}$, defined as the size of the smallest SLP and RLSLP generating the text, respectively. Clearly, it holds that $g_{rl} \le g$. It also has been proven that $g$ is NP-hard to compute \cite{CLLPPSS05}. 

\subsubsection{Other Repetitiveness Measures}

For self-containedness, we describe the most important repetitiveness measures and relate them with the accessible measures $g$ and $g_{rl}$; for more details see a survey \cite{Navacmcs20.3}.

\paragraph{Burrows-Wheeler Transform.}

The \emph{Burrows-Wheeler Transform} (BWT) \cite{BW94} is a reversible permutation of $T$, which we denote by $\bwt(T)$. It is obtained by sorting lexicographically all the rotations of the string $T$ and concatenating their last symbols, which can be done in $O(n)$ time. The measure $r$ is defined as the size of the \emph{run-length encoding} of $\bwt(T)$. Usually, $T$ is assumed to be appended with a sentinel symbol $\dol$ strictly smaller than any other symbol in $T$, and then we call $r_\dol$ the size of the run-length encoding of $\bwt(T\dol)$. This measure is then reachable, and fully-functional indexes of size $O(r_\dol)$ exist \cite{GNP18}, but interestingly, it is unknown to be accessible. While this measure is generally larger than others, it can be upper-bounded by $r_\dol = O(\delta\log\delta\log\frac{n}{\delta})$ \cite{KK20}.

\paragraph{Lempel-Ziv Parsing.}

The \emph{Lempel-Ziv parsing} (LZ) \cite{LZ76} of a text $T[1\dd n]$ is a \emph{factorization} into non-empty \emph{phrases}  $T = X_1X_2\dots X_z$ where each $X_i$ is either the first occurrence of a symbol or the longest prefix of $X_i \dots X_z$ with a copy in $T$ starting at a position in  $[1\dd |X_1\dots X_{i-1}|]$. LZ is called a \emph{left-to-right} parsing because each phrase has its \emph{source} starting to the left, and it is optimal among all parsings satisfying this condition. It can be constructed greedily from left to right in $O(n)$ time. The measure $z$ is defined as the number of phrases in the LZ parsing of the text, and it has been proved that $z \le g_{rl}$ \cite{NOPtit20}. While $z$ is obviously reachable, it is unknown to be accessible. A close variant $z_{end} \ge z$ \cite{KN13} that forces phrase sources to be end-aligned with a preceding phrase, has been shown to be accessible \cite{KS22}. 

\paragraph{Bidirectional macro schemes.}

A \emph{bidirectional macro scheme} (BMS) \cite{SS82} is a factorization of a text $T[1\dd n]$ where each phrase can have its source starting either to the left or to the right. The only requeriment is that by following the pointers from phrases to sources, we should eventually be able to fully decode the text. The measure $b$ is defined as the size of the smallest BMS representing the text. Clearly, $b$ is reachable, but it is unknown to be accessible. It holds that $b \le z$, and it was proved that $b \le r_\$$ \cite{NOPtit20}. Computing $b$ is NP-hard \cite{Gallant1982}. 

\paragraph{String Attractors.}

A \emph{string attractor} for a text $T[1\dd n]$ is a set of positions $\Gamma \subseteq [1\dd n]$ such that any substring of $T[i\dd j]$ has an occurrence $T[i'\dd j']$ crossing at least one of the positions in $\Gamma$ (i.e., there exist $k \in \Gamma$ such that $i'\le k\le j'$). The measure $\gamma$ is defined as the size of the smallest string attractor for the string $T$, and it is NP-hard to compute \cite{KP18}. It holds that $\gamma$ lower bounds the size $b$ of the smallest bidirectional macro scheme and can sometimes be asymptotically smaller \cite{BFIKMN21}. On the other hand, it is unknown if $\gamma$ is reachable.

\paragraph{Substring Complexity.}

Let $T[1\dd n]$ be a text and $T_k$ be the number of distinct substrings of length $k$ in $T$, which define its so-called substring complexity. Then the measure is $\delta = \max_k T_k/k$ \cite{RRRS13,CEKNP20}. This measure can be computed in $O(n)$  time and lower-bounds $\gamma$, and thus all previous measures of compressibility, for every text. On the other hand, it is known to be unreachable \cite{KNP22}. The related measure $\delta'=\deltalog$ is reachable and accessible, and still lower-bounds $b$ and all other reachable measures on some text family for every $n$, $\sigma$, and $\delta$ \cite{KNP22}. Besides, $g_{rl}$ (and thus $z$, $b$, and $\gamma$, but not $g$) are upper-bounded by $O(\deltalog)$; $g$ can be upper-bounded by $O(\gamma\log^2\frac{n}{\gamma})$ \cite{KNP22,KP18}.

\paragraph{L-systems.}

An \emph{L-system} (for compression) is a tuple $L = (V, \varphi, \tau, S, d, n)$ extending a traditional Lindenmayer system \cite{Lindenmayer1968a,Lindenmayer1968b}, where $V$ is the set of variables (which are also considered as terminal symbols), $\varphi: V \rightarrow V^+$ is the set of rules (and also a morphism of strings), $\tau: V \rightarrow V$ is a coding, $S \in V$ the initial variable, and $d$ and $n$ are integers. The string generated by the system is $\tau(\varphi^d(S))[1\dd n]$. The measure $\ell$ is defined as the size of the smallest L-system generating the string. It has been proven that $\ell$ is incomparable to $\delta$ ($\ell$ can be smaller by a $\sqrt{n}$ factor) and almost any other repetitiveness measure considered in the literature \cite{NU21,NU23}.

\section{Iterated Straight-Line Programs}

We now define iterated SLPs and show that they can be much smaller than $\delta$.

\begin{definition}
An \emph{iterated straight-line program} of \emph{degree} $d$ ($d$-ISLP)  is an SLP that allows in addition \emph{iteration rules} of the form $$A \rightarrow \prod_{i=k_1}^{k_2} B_1^{i^{c_1}} \cdots B_t^{i^{c_t}}$$ 
where $1 \le k_1, k_2$, $0 \le c_1,\dots,c_t \leq d$ are integers and $B_1 \dots B_t$ are variables that cannot reach $A$ (so the ISLP generates a unique string). Iteration rules have size $2+2t=O(t)$ and expand to
$$\gexp(A) = \prod_{i=k_1}^{k_2} \gexp(B_1)^{i^{c_1}}\!\cdots \gexp(B_t)^{i^{c_t}}$$
where if $k_1 > k_2$ the iteration goes from $i=k_1$ downwards to $i=k_2$.
The size $size(G)$ of a $d$-ISLP $G$ is the sum of the sizes of all of its rules.
\end{definition}

\begin{definition}
The measure $g_{it(d)}(T)$ is defined as the size of the smallest $d$-ISLP that generates $T$, whereas $g_{it}(T) = \min_{d \ge 0} g_{it(d)}(T)$.
\end{definition}

The following observations show that ISLPs subsume RLSLPs, and thus, can be smaller than the smallest L-system.

\begin{proposition}For any $d \geq 0$, it always holds that $g_{it(d)} \le g_{rl}$.
\end{proposition}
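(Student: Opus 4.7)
The plan is a direct, rule-by-rule translation from the smallest RLSLP generating $T$ into a $d$-ISLP of the same (asymptotic) size. Let $G$ be an RLSLP with $size(G) = g_{rl}$ and start symbol $S$ that generates $T$. I build an ISLP $G'$ with the same set of variables and the same start symbol by copying every terminal rule $A \to \syma$ and every binary rule $A \to BC$ of $G$ verbatim, since these are already legal ISLP rules and their right-hand-side lengths are unchanged.

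The only nontrivial step is converting run-length rules. For each rule $A \to B^k$ in $G$, I introduce the iteration rule $A \to \prod_{i=1}^{k} B^{i^0}$, corresponding to the choice $k_1 = 1$, $k_2 = k$, $t = 1$, $B_1 = B$, and $c_1 = 0$. Because $i^0 = 1$ for every $i$, this expands to $B \cdot B \cdots B = B^k$, agreeing with the RLSLP rule's expansion. The constraint $0 \le c_1 \le d$ is satisfied for every $d \ge 0$ since $c_1 = 0$, and the requirement that $B$ cannot reach $A$ is inherited from the acyclicity of $G$.

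A straightforward induction on the structure of the derivation tree then shows that $\gexp_{G'}(S) = \gexp_G(S) = T$, so $G'$ is a valid $d$-ISLP generating $T$. For the size analysis, the terminal and binary rules contribute identically in $G$ and $G'$, while each run-length rule of size $2$ in $G$ becomes an iteration rule of constant size $2 + 2t = 4$ in $G'$; summing over all rules gives $size(G') = O(g_{rl})$, hence $g_{it(d)} \le g_{rl}$ in the asymptotic sense used throughout the paper.

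The ``main obstacle'' here is essentially bookkeeping: confirming that the side conditions of the iteration rule (acyclicity, integer range of $c_1$, uniqueness of the generated string) all follow from the corresponding properties of the source RLSLP. Since the transformation is purely local, no global restructuring is needed, and the statement holds uniformly for every $d \ge 0$.
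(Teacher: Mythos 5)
Your proof is correct and takes exactly the same approach as the paper, which likewise simulates each run-length rule $A \rightarrow B^k$ by the iteration rule $A \rightarrow \prod_{i=1}^{k} B^{i^0}$ and keeps all other rules verbatim. The one (shared) subtlety is that under the paper's size accounting this iteration rule costs $2+2t=4$ rather than $2$, so the literal translation gives $g_{it(d)} = O(g_{rl})$; you flag this constant-factor issue honestly, whereas the paper's statement of the exact inequality glosses over it.
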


\begin{proof}Just note that a rule $A \rightarrow \prod_{i=1}^k B^{i^0}$ from an ISLP simulates a rule $A \rightarrow B^k$ from a RLSLP. In particular, $0$-ISLPs are equivalent to RLSLPs. \qed
\end{proof}

\begin{proposition}For any $d \geq 0$, there exists a string family where $g_{it(d)} = o(\ell)$.
\end{proposition}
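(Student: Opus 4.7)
The plan is to exhibit, for each $d\geq 0$, an explicit family of strings $\{T_n\}_{n\geq 1}$ together with a constant-size (or $O(\log n)$-size) $d$-ISLP generating $T_n$, and then argue a super-constant (resp.\ super-logarithmic) lower bound for $\ell(T_n)$. A natural candidate for $d\geq 1$ is
\[
T_n \;=\; \prod_{i=1}^{n} \syma^{\,i^{d}}\,\symb \;=\; \syma\,\symb\,\syma^{2^d}\symb\,\syma^{3^d}\symb\cdots\syma^{n^d}\symb,
\]
a string of length $\Theta(n^{d+1})$ consisting of $n$ polynomially growing runs of $\syma$ separated by single $\symb$'s. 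For $d=0$ (where $d$-ISLPs coincide with RLSLPs by the previous proposition) I would pick a different family, such as a concatenation of sufficiently many carefully chosen run-length encodable words that resists compression by a single morphism.

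For the upper bound, the ISLP is immediate: introduce $A\to\syma$, $B\to\symb$, and the single iteration rule
\[
S \;\to\; \prod_{i=1}^{n} A^{i^{d}}\,B^{i^{0}},
\]
yielding a $d$-ISLP of constant size (or $O(\log n)$ if one charges $\log$ bits per integer), so that $g_{it(d)}(T_n)=O(1)$ (resp.\ $O(\log n)$). This exploits precisely the feature that distinguishes ISLPs from RLSLPs: the ability to place variable polynomial exponents on nonterminals inside a product.

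The main obstacle is the lower bound $\ell(T_n)=\omega(g_{it(d)}(T_n))$. The plan is to exploit the geometric growth of iterated morphisms against the strict polynomial structure of $T_n$. An L-system of size $s$ uses a morphism $\varphi$ whose image lengths at depth $k$ grow as $\Theta(\lambda^{k})$ for some eigenvalue $\lambda$ of its incidence matrix, so in $O(s)$ symbols per level it can only produce $O(\mathrm{poly}(s))$ distinct block patterns and maximal runs whose lengths lie in a union of geometric progressions determined by $\varphi$. On the other hand, $T_n$ contains $n$ maximal $\syma$-runs of pairwise distinct lengths $1^{d},2^{d},\dots,n^{d}$ separated by $\symb$'s; I would argue combinatorially that reproducing all $n$ distinct run lengths forces either $|V|=\Omega(n^{\varepsilon})$ or a morphism of size $\Omega(\log n)$, contradicting $\ell(T_n)=o(\log n)$. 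The cleanest route is probably to bound the number of distinct maximal-run lengths appearing in $\tau(\varphi^{d'}(S))$ as a function of the L-system's size, using the fact that the set of run lengths of $\varphi^{k}(X)$ evolves by a finite linear recurrence whose state is controlled by $|V|$ and the constants in $\varphi$. Finally, for the $d=0$ case, I would adapt this argument to a family of the form $\prod_{i=1}^{n}\syma^{f(i)}\symb$ where $f(i)$ is chosen so that the run multiset is incompatible with any short morphic description yet still admits a small RLSLP via sharing.
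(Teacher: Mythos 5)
Your proof has a genuine gap, and it lies exactly where you place the ``main obstacle'': the lower bound on $\ell$ is never proven, only sketched, and the sketch rests on a false premise. You assume that an L-system of size $s$ has morphism image lengths growing geometrically ($\Theta(\lambda^k)$) and hence cannot produce many distinct, polynomially spaced run lengths. But D0L morphisms can have \emph{polynomial} growth (dominant eigenvalue $1$), and such systems are tailor-made to generate precisely your candidate family. For $d=1$, take $V=\{S,A,\syma,\symb\}$ with $\varphi(S)=SA\symb$, $\varphi(A)=A\syma$, $\varphi$ the identity on $\syma,\symb$, and a coding $\tau$ with $\tau(A)=\syma$; then
$$\varphi^n(S) \;=\; S\,\prod_{i=1}^{n} A\,\syma^{i-1}\symb,$$
which after coding is, up to one leading symbol, exactly $\prod_{i=1}^n \syma^{i}\symb$, so $\ell=O(1)$ on this family. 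For general fixed $d$, letter counts of the form $\binom{k}{j}$ arise from rules $A_j \rightarrow A_jA_{j-1}$, and since $k^d=\sum_j S(d,j)\,j!\binom{k}{j}$ with nonnegative Stirling coefficients, an $O_d(1)$-size system generates $\prod_{i=1}^n\syma^{i^d}\symb$ as well. So for your family $\ell(T_n)=O(1)$ and the separation $g_{it(d)}=o(\ell)$ is simply false there; this is why the paper uses this very family (Lemma 3) to separate $g_{it(d)}$ from $\delta$, \emph{not} from $\ell$. Indeed, Lemma 4 of the paper shows the opposite phenomenon on Fibonacci words ($\ell=O(1)$ yet $g_{it(d)}=\Omega(\log n)$): L-systems are strongest exactly on morphic-style sequences, so any family separating $g_{it(d)}$ from $\ell$ must be chosen to defeat morphisms, which your construction does not.

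The paper's actual proof avoids constructing anything: by its Proposition 1, $g_{it(d)} \le g_{rl}$ for every $d\ge 0$ (a rule $A\rightarrow\prod_{i=1}^{k}B^{i^0}$ simulates $A\rightarrow B^k$), and Navarro and Urbina had already exhibited a string family with $g_{rl}=o(\ell)$; that same family then gives $g_{it(d)}=o(\ell)$ for all $d\ge 0$ at once, including $d=0$. If you want a self-contained argument instead of citing that family, the hard work is establishing an $\omega(g_{rl})$ lower bound on $\ell$ for some explicit strings, which is a substantial result in its own right and cannot be replaced by the growth heuristic you propose.
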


\begin{proof}Navarro and Urbina show a string family where $g_{rl} = o(\ell)$ \cite{NU23}. Hence, $g_{it(d)}$ is also $o(\ell)$ in this family. \qed
\end{proof}

We now show that $d=1$ suffices to obtain ISLPs that are significantly smaller than $\delta$ for some string families.

\begin{lemma}Let $d \ge 1$. There exists a string family with $g_{it(d)} = O(1)$ and $\delta = \Omega(\sqrt{n})$.
\end{lemma}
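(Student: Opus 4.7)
The plan is to exhibit, for each $m \ge 1$, a single string $W$ of length $n = \Theta(m^2)$ that (i) is generated by a constant-size $1$-ISLP, and (ii) satisfies $\delta(W) = \Omega(m) = \Omega(\sqrt{n})$. Since $g_{it(d)}$ is non-increasing in $d$ (any $1$-ISLP is also a $d$-ISLP for $d \ge 1$), the conclusion then follows for all $d \ge 1$.

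I would take $W = \syma\symb \cdot \syma^2\symb \cdot \syma^3\symb \cdots \syma^m\symb$, whose length is $n = \sum_{i=1}^{m}(i+1) = m(m+3)/2$. For the upper bound on $g_{it(1)}$, the intended $1$-ISLP consists of the rules $A \to \syma$, $B \to \symb$, and $S \to \prod_{i=1}^{m} A^{i^{1}} B^{i^{0}}$, of total size $1+1+(2+2\cdot 2) = 8 = O(1)$; a direct expansion shows that it produces exactly $W$.

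The substantive step is the lower bound $\delta(W) = \Omega(\sqrt{n})$, which I would obtain by counting distinct length-$m$ substrings of $W$. For each pair $(a,i)$ with $1 \le a \le i$ and $a+i \le m-3$, consider the length-$m$ substring $u_{a,i}$ starting inside the block $\syma^i\symb$ at the position leaving exactly $a$ occurrences of $\syma$ before that $\symb$; then $u_{a,i}$ begins with $\syma^a\symb\syma^{i+1}\symb$, and since $a+i+3 \le m$ this $(a+i+3)$-long prefix fits inside $u_{a,i}$. Reading off the first two $\syma$-runs of $u_{a,i}$ recovers $a$ and $i+1$, hence the pair $(a,i)$, so these substrings are pairwise distinct. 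There are $\Omega(m^2)$ admissible pairs, so $W$ has $\Omega(m^2)$ distinct length-$m$ substrings and $\delta(W) \ge \Omega(m^2)/m = \Omega(m) = \Omega(\sqrt{n})$.

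The main obstacle is making the counting argument watertight: if the length-$m$ window were truncated mid-$\syma$-run of the second block, different indices $i$ could yield the same substring, collapsing the count. Restricting to pairs with $a+i+3 \le m$ ensures the window strictly contains the second $\symb$, so both run lengths $a$ and $i+1$ are explicitly visible; the remaining bookkeeping (having enough characters to the right of the starting position inside $W$, and $a \le i$ corresponding to a valid offset inside block $i$) is then routine.
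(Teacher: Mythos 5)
Your proposal is correct, and it uses the same string family and the same constant-size $1$-ISLP as the paper: $s_m = \prod_{i=1}^m \syma^i\symb$ with rules $S \rightarrow \prod_{i=1}^{m} A^{i}B$, $A \rightarrow \syma$, $B \rightarrow \symb$. Where the two proofs part ways is in the lower bound on $\delta$. The paper does not prove it directly: it cites the earlier result that $\delta = \Omega(\sqrt{n})$ holds for the related family $\symc s_m$, and then invokes the sensitivity result that deleting a single character can decrease $\delta$ by at most $1$, transferring the bound to $s_m$ itself. You instead count distinct length-$m$ substrings from scratch: each pair $(a,i)$ with $1 \le a \le i$ and $a+i \le m-3$ yields a window beginning $\syma^a\symb\syma^{i+1}\symb$, whose first two $\syma$-runs are fully delimited inside the window (this is exactly what the constraint $a+i+3\le m$ buys) and therefore recover $(a,i)$, giving $\Omega(m^2)$ distinct substrings of length $m$ and hence $T_m/m = \Omega(m) = \Omega(\sqrt{n})$ since $n=\Theta(m^2)$. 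Your injectivity and boundary bookkeeping is handled correctly, including the point that the suffix of $W$ starting in block $i$ always has length at least $m$ because block $m$ alone has length $m+1$. The trade-off: your argument is elementary and self-contained, needing neither the prior bound on $\symc s_m$ nor the edit-sensitivity of $\delta$, while the paper's route is shorter because it reuses known machinery, at the price of depending on two external results.
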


\begin{proof}Such a family is formed by the strings $s_k = \prod_{i=1}^k \syma^i\symb$. The 1-ISLPs with initial rule $S_k \rightarrow \prod_{i=1}^{k} A^iB$, and rules $A \rightarrow \syma$, $B \rightarrow \symb$, generate each string $s_k$ in the family using $O(1)$ space. On the other hand, it has been proven that $\delta = \Omega(\sqrt{n})$ in the family $\symc s_k$ \cite{NU23}. As $\delta$ can only decrease by $1$ after the deletion of a character \cite{AFI2023},  $\delta = \Omega(\sqrt{n})$ in the family $s_k$ too.\qed
\end{proof}

On the other hand, ISLPs can perform worse than other compressed representations; recall that $\delta \le \gamma \le b \le r_\dol$.

\begin{lemma}\label{lem:fib} Let  $\mu \in \{r, r_\dol, \ell\}$. There exists a string family with $g_{it(d)} = \Omega(\log n)$ and $\mu = O(1)$.
\end{lemma}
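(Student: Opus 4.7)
The plan is to take the Fibonacci words $F_n$, defined by $F_1=\symb$, $F_2=\syma$, $F_k=F_{k-1}F_{k-2}$, as the desired family. The upper bounds on $\mu$ are classical: $\ell(F_n)=O(1)$ follows directly from the Fibonacci morphism $\varphi(\syma)=\syma\symb$, $\varphi(\symb)=\syma$, and $r(F_n), r_\dol(F_n)=O(1)$ are well-known results on BWTs of standard/Sturmian words. The content of the lemma therefore lies in proving $g_{it(d)}(F_n)=\Omega(\log|F_n|)=\Omega(n)$.

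The approach is to show that any $d$-ISLP generating $F_n$ can be expanded into an ordinary SLP with only a constant-factor blow-up in size (the constant depending on $d$), after which one invokes the standard $g(F_n)=\Omega(\log|F_n|)$, itself following from $g(T)\ge\log_2|T|$ for every SLP. The leverage is the combinatorial fact that the Fibonacci word has critical exponent $2+\phi<4$, so for every primitive string $u$, only $u$, $u^2$, and $u^3$ occur as factors of $F_n$. Consider any iteration rule $A\to\prod_{i=k_1}^{k_2} B_1^{i^{c_1}}\cdots B_t^{i^{c_t}}$ of the grammar: since $\gexp(A)$ is a substring of $F_n$, so is every contiguous factor of its expansion, and in particular, taking $i=\max(k_1,k_2)$, the factor $\gexp(B_j)^{i^{c_j}}$ occurs in $F_n$. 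Writing $\gexp(B_j)=u^m$ with $u$ primitive, the critical-exponent bound forces $m\cdot i^{c_j}\le 3$. This yields $\max(k_1,k_2)\le 3$ whenever some $c_j\ge 1$; in the degenerate case where all $c_j=0$, the rule collapses to $(B_1\cdots B_t)^{|k_2-k_1|+1}$, so the same bound forces $|k_2-k_1|+1\le 3$.

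In every case, the right-hand side of the iteration rule unfolds into only $O(t)$ variable occurrences (using $|k_2-k_1|+1\le 3$ and $i^{c_j}\le 3^d=O(1)$ for constant $d$), so it can be replaced by $O(t)$ plain binary SLP rules of total size $O(t)$, at most a constant factor larger than the original $2+2t$. Summing over all iteration rules converts the ISLP into an SLP of size $O(g_{it(d)}(F_n))$, hence $g_{it(d)}(F_n)=\Omega(g(F_n))=\Omega(\log|F_n|)$. The main point requiring care is the primitive-root step: one must verify that the maximal run of $B_j$ emitted by the rule really appears as one contiguous factor of $F_n$ (which it does, since the $i^{c_j}$ copies of $B_j$ sit side by side in the iteration), and that the integer inequality $m\cdot i^{c_j}\le 3$ indeed bounds $\max(k_1,k_2)$ by an absolute constant whenever $c_j\ge 1$, uniformly over $d$.
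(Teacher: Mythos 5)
Your proposal is correct and follows essentially the same route as the paper: the Fibonacci family, fourth-power-freeness (you use the sharper critical-exponent form $2+\phi<4$ with primitive roots) to collapse every iteration rule into $O(t)$ variable occurrences, unfolding into a plain SLP of size $O(g_{it(d)})$, and then invoking $g=\Omega(\log n)$, with the classical $O(1)$ bounds on $r$, $r_\dol$, $\ell$ for Fibonacci words. One small remark: your own inequality $m\cdot i^{c_j}\le 3$ already forces $i^{c_j}\le 3$ for every $i$ in the range (not just $i=\max(k_1,k_2)$), so the unfolded size is at most $9t$ uniformly in $d$, exactly as in the paper --- the detour through $i^{c_j}\le 3^d=O(1)$ weakens the constant to depend on $d$ for no reason, though this is harmless for the lemma as stated.
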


\begin{proof}
Consider the family of Fibonacci words defined recursively as $F_0 = \syma$, $F_1 = \symb$, and $F_{i+2} = F_{i+1}F_i$ for $i \geq 0$. Fibonacci words cannot contain substrings of the form $x^4$ for any $x \neq \varepsilon$ \cite{KARHUMAKI1983}. Consider an ISLP for a Fibonacci word and a rule of the form $A \rightarrow \prod_{i=k_1}^{k_2}B_1^{i^{c_1}}\cdots B_t^{i^{c_t}}$. Observe that if $c_r \neq 0$ for some $r$, then $\max(k_1, k_2) < 4$, as otherwise $\gexp(B_r)^4$ occurs in $T$. Similarly, if $c_r = 0$ for all $r$, then $|k_1 - k_2| < 3$, as otherwise $\gexp(B_1\cdots B_t)^4$ appears in $T$. In the latter case, we can rewrite the product with $k_1, k_2 \in [1\dd 3]$. Therefore, we can unfold the product rule into standard SLP rules of total size at most $9t$ ($3t$ variables raised to at most 3 each). Hence, for any $d$-ISLP $G$ generating a Fibonacci word, there is an SLP $G'$ of size $O(|G|)$ generating the same string. As $g = \Omega(\log n)$ in every string family \cite{Navacmcs20.3}, we obtain that $g_{it(d)} = \Omega(\log n)$ in this family too. On the other hand, $r_\dol, r$, and $\ell$ are $O(1)$ in the even Fibonacci words \cite{NOPtit20,MRS2003,NU21}.\qed
\end{proof}

\begin{lemma}There exists a string family satisfying that $z = O(\log n)$ and $g_{it(d)} = \Omega(\log^2 n/\log\log n)$.
\end{lemma}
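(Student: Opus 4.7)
The plan is to combine a known separation between the smallest grammar and the Lempel--Ziv parse with the power-avoidance argument already used in Lemma~\ref{lem:fib}, so as to lift the grammar lower bound to $d$-ISLPs.

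First, I would identify a binary string family $(T_n)_n$ enjoying three properties simultaneously: (i) $z(T_n) = O(\log n)$; (ii) $g(T_n) = \Omega(\log^2 n / \log\log n)$ for the smallest plain SLP; and (iii) $T_n$ is $k$-power-free for some fixed constant $k \ge 4$, i.e.\ no substring of $T_n$ equals $x^k$ for $x \neq \varepsilon$. The gap (i)--(ii) is the classical $\Omega(\log n/\log\log n)$ separation between the smallest SLP and the LZ parse from the literature on the smallest grammar problem. Property (iii) can be ensured by composing the separating construction with a constant-sized $k$-power-free coding morphism; a careful accounting shows that such a morphism distorts $z$ and $g$ only by constant factors, so the separation survives the encoding.

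Second, I would lift the SLP lower bound to ISLPs by re-running the Fibonacci argument of Lemma~\ref{lem:fib}. Given any $d$-ISLP generating $T_n$, consider an arbitrary iteration rule $A \rightarrow \prod_{i=k_1}^{k_2} B_1^{i^{c_1}}\cdots B_t^{i^{c_t}}$. If some $c_r > 0$, then $\max(k_1,k_2) \le k^{1/c_r} \le k$ must hold, for otherwise $\gexp(B_r)^{i^{c_r}}$ would contain $\gexp(B_r)^k$, a forbidden $k$-th power in $T_n$; if instead every $c_r = 0$, then $|k_1-k_2| < k$, since otherwise $\gexp(B_1\cdots B_t)^k$ would appear in $T_n$. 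Either way, the iteration rule unfolds into standard SLP rules of total size $O(t \cdot k^{d+1}) = O(t)$ (the hidden constant depending only on $k$ and $d$). Summing over all iteration rules produces an equivalent SLP of size $O(g_{it(d)}(T_n))$, yielding $g(T_n) = O(g_{it(d)}(T_n))$.

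Combining the two steps gives $g_{it(d)}(T_n) = \Omega(g(T_n)) = \Omega(\log^2 n / \log\log n)$ while $z(T_n) = O(\log n)$, proving the lemma. The main obstacle is Step~1: the standard Charikar-et-al.--style constructions separating $g$ from $z$ typically exploit long repeated blocks, some of which are high powers, so one must either adapt the construction to be intrinsically power-free or verify rigorously that the chosen power-free coding morphism preserves both the upper bound on $z$ and the lower bound on $g$. Everything else reduces to the reduction of ISLP rules to SLP rules via power-avoidance, which is a direct generalization of the calculation done in Lemma~\ref{lem:fib}.
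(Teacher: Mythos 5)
Your overall strategy --- find a power-free family realizing a known $g$-versus-$z$ separation, then lift the lower bound from SLPs to $d$-ISLPs via the power-avoidance unfolding of Lemma~\ref{lem:fib} --- is exactly the paper's strategy, and your second step is carried out correctly (for fixed $d$ the unfolding blow-up is indeed a constant factor). The gap is in your first step, and it is not a detail to be checked later: the mechanism you propose cannot work. A string morphism commutes with taking powers, $\varphi(x^k) = \varphi(x)^k$, so if the separating construction contains a $k$-th power as a substring --- and the classical Charikar-et-al.-style constructions do, since they are built from long unary runs $\syma^{k_i}$ whose grammar-size lower bound comes from addition chains --- then its image under \emph{any} morphism, ``$k$-power-free coding'' or not, still contains a $k$-th power of a nonempty string. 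Power-free morphisms only guarantee that images of \emph{power-free} strings remain power-free; they cannot destroy powers already present in the input. So the second branch of the alternative you state (``verify rigorously that the chosen power-free coding morphism preserves the bounds'') is void: no such morphism exists, and property (iii) cannot be obtained from the classical construction by composition with a coding.

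What is needed is your first alternative --- a construction that is \emph{intrinsically} power-free --- and this is precisely what the paper invokes: the family $S = T(k_1)|_1 T(k_2)|_2 \cdots T(k_{p-1})|_{p-1} T(k_p)$ of Thue--Morse prefixes $T(k_i)$ concatenated with unique separators, for which both bounds, $g = \Omega(\log^2 k_1/\log\log k_1)$ and $z = O(\log k_1)$ with $p = \Theta(k_1)$, are already established in the literature \cite{BGLP2018}. Since Thue--Morse prefixes are cube-free and each separator occurs exactly once, $S$ contains no fourth powers, and the paper then concludes $g_{it(d)} = \Theta(g)$ by the same unfolding argument you give. The essential content missing from your proposal is that replacing the unary runs by Thue--Morse prefixes while re-proving the addition-chain-type lower bound on $g$ is a nontrivial known theorem, not something recoverable from the classical separation by a re-encoding --- which, as argued above, provably cannot exist.
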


\begin{proof}Let $T(n)$ be the length $n$ prefix of the infinite Thue-Morse word on the alphabet $\{\syma,\symb\}$. Let $k_1,...,k_p$ be a set of distinct positive integers, and consider strings of the form $S = T(k_1)|_1T(k_2)|_2\cdots T(k_{p-1})|_{p-1}T(k_p)$, where $|_i$'s are unique separators and $k_1$ is the largest of the $k_i$. Since the sequences $T(k_i)$ are cube-free \cite{AlloucheShallit_ThueMorse}, there is no asymptotic difference in the size of the smallest SLP and the smallest ISLP (similarly to Lemma \ref{lem:fib}) for the string $S$. Hence, $g_{id(d)} = \Theta(g)$ in this family. It has been proven that $g = \Omega(\log^2 k_1 /\log\log k_1)$ and $z = O(\log k_1)$ for some specific sets of integers where $p = \Theta(k_1)$ \cite{BGLP2018}. Thus, the result follows.\qed
\end{proof}

One thing that makes ISLPs robust is that they are not very sensitive to reversals, morphism application, or edit operations (insertions, deletions, and substitutions of a single character). The measure $g_{it(d)}$ behaves similarly to SLPs in this matter, for which it has been proved that $g(T') \le 2g(T)$ after an edit operation that converts $T$ to $T'$ \cite{AFI2023}, and that $g(\varphi(T)) \le g(T) + c_{\varphi}$ with $c_{\varphi}$ a constant depending only on the morphism $\varphi$ \cite{FRMU2023}. This makes $g_{it(d)}$ much more robust to string operations than measures like $r$ and $r_\$$, which are highly sensitive to all these transformations \cite{GILPST2021,GILRSU2023,FRMU2023,AFI2023}.

\begin{lemma} \label{lem:edit}
Let $G$ be a $d$-ISLP generating $T$. Then there exists a $d$-ISLP of size $|G|$ generating the reversed text $T^R$. Let $\varphi$ be a morphism. Then there exists a $d$-ISLP of size $|G| + c_\varphi$ generating the text $\varphi(T)$, where $c_\varphi$ is a constant depending only on $\varphi$. Moreover, there exists a $d$-ISLP of size at most $O(|G|)$ generating $T'$ where $T$ and $T'$ differ by one edit operation.
\end{lemma}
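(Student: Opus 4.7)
The plan is to treat the three claims separately, in increasing order of subtlety.

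For the reversal I build a companion variable $A^R$ for each variable $A$ of $G$ and mirror its rule with the size preserved: a terminal rule stays as $A^R\to\syma$; a binary rule $A\to BC$ becomes $A^R\to C^R B^R$; and an iteration rule $A\to \prod_{i=k_1}^{k_2} B_1^{i^{c_1}}\cdots B_t^{i^{c_t}}$ becomes $A^R\to \prod_{i=k_2}^{k_1} (B_t^R)^{i^{c_t}}\cdots (B_1^R)^{i^{c_1}}$, using the downward-iteration convention so that the blocks are enumerated in reverse order of $i$ and their internal factors are also reversed. For the morphism I keep all non-terminal rules intact and touch only the terminal rules: for each terminal $\syma$ appearing in $G$ I build once an auxiliary SLP with start variable $X_\syma$ of size $O(|\varphi(\syma)|)$ that expands to $\varphi(\syma)$, and substitute $X_\syma$ for every variable $A$ whose rule was $A\to\syma$ throughout the grammar. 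The overhead is $c_\varphi=O(|\Sigma|+\sum_{\syma\in\Sigma}|\varphi(\syma)|)$, a constant depending only on $\varphi$.

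For the single edit I imitate the root-to-leaf ``spine rebuilding'' used by Akagi et al.\ for SLPs. Consider the substitution case first: to change $T[p]$ to $\syma$, I trace the path from the root of the derivation tree to the leaf corresponding to $T[p]$ and create a modified copy of every rule met on that spine. A binary rule $A\to BC$ whose path enters $B$ becomes $A'\to B'C$, and symmetrically on the right. The delicate case is when the path enters an iteration rule $A\to \prod_{i=k_1}^{k_2} B_1^{i^{c_1}}\cdots B_t^{i^{c_t}}$ at a specific iteration $i=i^*$, factor $B_j$, and $m$-th copy of $B_j$; I split the right-hand side into three contiguous pieces: (i)~a prefix iteration $\prod_{i=k_1}^{i^*-1}(\cdots)$ of the same shape; (ii)~a middle term $B_1^{(i^*)^{c_1}}\cdots B_{j-1}^{(i^*)^{c_{j-1}}}\, B_j^{m-1}\, B_j'\, B_j^{(i^*)^{c_j}-m}\, B_{j+1}^{(i^*)^{c_{j+1}}}\cdots B_t^{(i^*)^{c_t}}$, in which each power is encoded by a run-length (i.e., $0$-iteration) rule and the pieces are glued by $O(t)$ new binary rules; and (iii)~a suffix iteration $\prod_{i=i^*+1}^{k_2}(\cdots)$. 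The recursive modified variable $B_j'$ continues the surgery downward. Insertions and deletions reduce to the same path rebuilding applied at position $p$ to produce grammars for $T[1\dd p-1]$ and $T[p+1\dd n]$ (or $T[p\dd n]$), which are then concatenated with or without the inserted symbol.

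The main obstacle is the accounting on iteration rules. Each iteration-rule level of the spine contributes $O(t)$ new material, where $t$ is that rule's size, while each binary-rule level contributes $O(1)$. Thanks to the ISLP acyclicity condition (no variable can reach itself), each variable appears on at most one node of any root-to-leaf path, so the sizes of the rules met on the spine sum to at most $|G|$. Adding this $O(|G|)$ of new rules to the unchanged rules of $G$ yields an ISLP of size $O(|G|)$ generating $T'$, as claimed.
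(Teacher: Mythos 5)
Your proposal is correct and follows essentially the same route as the paper: the paper's proof likewise rebuilds the root-to-leaf spine, splitting each iteration rule met on the path into prefix iterations $\prod_{i=k_1}^{k-1}$, the factors surrounding the edited copy of $B_r$ at iteration $k$ (encoded with run-length/single-iteration products), and suffix iterations $\prod_{i=k+1}^{k_2}$, then charges the $O(t)$ new material per level against that rule's size $2+2t$, which sums to $O(|G|)$ because variables on the spine are distinct. The only minor differences are that you spell out the reversal and morphism constructions (which the paper omits as easy) and handle insertion/deletion by building and concatenating prefix/suffix grammars, whereas the paper treats all three edit operations uniformly inside the same spine rebuild by setting the leaf rule to $\symc$, $\syma\symc$, $\symc\syma$, or $\varepsilon$.
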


\begin{proof}
We omit the proof for the first two claims as they are fairly easy to see. For the edit operations, we proceed as follows. Consider the derivation tree of the ISLP, and the path from the root to the character we want to substitute, delete, or insert a character before or after. Then, we follow this path in a bottom up manner, constructing a new variable $A'$ for each node $A$ we visit. We start at some $A \rightarrow \syma$, so we construct $A' \rightarrow x$ where either $x = \symc$ or $x = \syma\symc$ or $x= \symc\syma$ or $x = \varepsilon$ depending on the edit operation. If we reach a node $A \rightarrow BC$ going up from $B$ (so we already constructed $B'$), we construct a node $A' \rightarrow B'C$ (analogously if we come from $C$).
If we reach a node $A \rightarrow \prod_{i=k_1}^{k_2} B_1^{i^{c_1}}\dots B_t^{i^{c_t}}$ going up from a specific $B_r$ with $r \in [1\dd t]$ (so we already constructed $B_r'$) at the $k$-th iteration of the product with $k_1 \le k \le k_2$ and being the $q$-th copy of $B_r$ inside $B_r^{k^{c_r}}$, then we construct the following new rules 
\begin{align*}
&A_1  \rightarrow \prod_{i=k_1}^{k-1}B_1^{i^{c_1}}\dots B_t^{i^{c_t}},\, A_2  \rightarrow \prod_{i=k}^{k}B_1^{i^{c_1}}\dots B_{r-1}^{i^{c_{r-1}}},\, A_3  \rightarrow \prod_{i=1}^{{q-1}}B_r^{i^0}, \\ 
&A_4  \rightarrow \prod_{i=q+1}^{k^{c_r}}B_r^{i^0},\,
A_5  \rightarrow \prod_{i=k}^{k}B_{r+1}^{i^{c_{r+1}}}\dots B_t^{i^{c_t}},\,
A_6  \rightarrow \prod_{i=k+1}^{k_2}B_1^{i^{c_1}}\dots B_t^{i^{c_t}}\\
&A' \rightarrow A_1A_2A_3B_r'A_4A_5A_6
\end{align*}
which are equivalent to $A$ (except by the modified, inserted, or deleted symbol) and sum to a total size of at most $6t + 21$. As $t \geq 1$, it holds that $(6t + 21)/(2t+2) \le 7$. After finishing the whole process, we obtain a $d$-ISLP of size at most $8|G|$. Note that this ISLP contains $\varepsilon$-rules. It also contains some non-binary SLP rules, which can be transformed into binary rules, at most doubling the size of the grammar.  
\qed
\end{proof}




\section{Accessing ISLPs}

We have shown that $g_{it(d)}$ breaks the lower bound $\delta$ already for $d \ge 1$. We now show that the measure is accessible. Concretely, we will show that any substring of length $\lambda$ can be extracted in time $O(\lambda + (h+\log n)\,\log n\log\log n)$, where $h$ is the height of the grammar tree, and in Section~\ref{sec:bal} we show that ISLPs can be balanced so they have $h=O(\log n)$. In total, we obtain the following result.

\begin{theorem}
Let $T[1\dd n]$ be represented by a $d$-ISLP of size $g_{it}$. Then, there exists a data structure of size $O(g_{it})$ that extracts any substring of $T$ of length $\lambda$ in time $O(\lambda + \log^2 n\log\log n)$ on a RAM machine of $\Theta(\log n)$ bits, using $O(\log^2 n \log \log n)$ additional words of working space.
\end{theorem}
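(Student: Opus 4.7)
The plan is to combine the balancing construction of Section~\ref{sec:bal} with a descent-plus-traverse algorithm on the derivation tree, handling iteration rules by evaluating bounded-degree polynomials and by predecessor queries on small per-rule structures. First, I would apply the balancing result to replace the input $d$-ISLP by an equivalent one of size $O(g_{it})$ whose derivation tree has height $h=O(\log n)$. I would then precompute, bottom-up, the length $|\gexp(A)|$ of each variable $A$: for an iteration rule $A \rightarrow \prod_{i=k_1}^{k_2} B_1^{i^{c_1}}\cdots B_t^{i^{c_t}}$ this is $\sum_{j=1}^t |\gexp(B_j)|\cdot P_{c_j}(k_1,k_2)$, where $P_c(k_1,k_2)=\sum_{i=k_1}^{k_2} i^c$ has a closed form that evaluates in $O(1)$ time for each constant $c\le d$. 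For each iteration rule I would also store the coefficients of the prefix-length polynomial $S_A(m)=\sum_{i=k_1}^{m}\sum_{j=1}^t i^{c_j}|\gexp(B_j)|$, of degree $d+1$ in $m$, costing $O(t)$ words per rule and $O(g_{it})$ in total.

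To extract $T[p\dd p+\lambda-1]$ the algorithm proceeds in two phases. The descent phase walks from the root to the leaf containing position $p$. At each visited node $A$ with current local offset $p'$, an SLP rule $A\rightarrow BC$ is resolved in $O(1)$ by comparing $p'$ against $|\gexp(B)|$; a run-length rule $A\rightarrow B^k$ by integer division. For an iteration rule I would binary-search the iteration index $i^\star$ with $S_A(i^\star-1)<p'\le S_A(i^\star)$, each probe evaluating $S_A$ in $O(1)$ time and so costing $O(\log n)$; then locate the active factor $B_{j^\star}$ in the $i^\star$-th term using a predecessor query on the $t$ partial sums $\sum_{r\le j} (i^\star)^{c_r}|\gexp(B_r)|$; finally recover the copy number inside $B_{j^\star}^{(i^\star)^{c_{j^\star}}}$ by division. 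Equipping each iteration rule with a $y$-fast trie (or an analogous $O(\log\log n)$ predecessor structure) over the domain of its partial sums supports this lookup in $O(\log\log n)$ time, so each descent step costs $O(\log n\log\log n)$ and the entire descent costs $O(h\cdot\log n\log\log n)=O(\log^2 n\log\log n)$.

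The extraction phase then sweeps the derivation leaves left to right, maintaining a stack of ancestors of size $O(h)$ together with, for each pending iteration rule on the stack, the current triple $(i,j,\text{copy})$. Moving to the next leaf is $O(1)$ in the common cases (next copy within $B_j^{i^{c_j}}$, next factor $B_{j+1}$, or next iteration $i+1$); an occasional climb of depth at most $h$ occurs when all these counters roll over, but its total cost amortizes to $O(1)$ per emitted leaf over the $\lambda$ leaves visited. The stack contents and the auxiliary descent state account for the claimed $O(\log^2 n\log\log n)$ words of working space.

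The main obstacle I foresee is showing that a position inside an iteration rule can be located in $O(\log n\log\log n)$ time within $O(g_{it})$ total space. This requires two observations: that prefix-length polynomials of constant degree can be evaluated in $O(1)$ time after $O(t)$ preprocessing per rule, and that the per-rule partial-sum predecessor structures sum to $O(g_{it})$ space across the whole grammar. Given these ingredients, composing the descent and extraction phases yields the claimed $O(\lambda+\log^2 n\log\log n)$ extraction time with $O(g_{it})$ space and $O(\log^2 n\log\log n)$ working words.
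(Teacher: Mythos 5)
Your high-level plan (balance first, then descend evaluating polynomials, then sweep leaves) matches the paper's, but three of your concrete ingredients fail, and two of them are exactly the pitfalls the paper's construction is designed to avoid. First, you store the coefficients of the prefix-length polynomial $S_A(m)$ and claim this costs $O(t)$ words per rule. Expanded in powers of $m$, that polynomial has degree $d+1$ and hence up to $d+2$ coefficients, independently of $t$: for a rule $A \rightarrow \prod_{i=k_1}^{k_2} B^{i^{d}}$, of size $4$, you would store $\Theta(d)$ words, and since $d$ can be $\Theta(\log n)$ this inflates the structure to $O(g_{it}\log n)$ space. The paper explicitly flags this problem (``it blows the space by an $O(\log n)$ factor'') and avoids it by storing only two $O(t)$-size arrays $S_A$ and $C_A$ per rule (cumulative expansion lengths grouped by exponent, with a chunked predecessor structure over the exponent values $0,\ldots,d$), reconstructing $f^+(k)$ at query time from the grammar-independent power sums $p_c(k)=\sum_{i=1}^{k} i^c$; the Bernoulli-number representation of those $p_c$ lives in the $O(\log^2 n\log\log n)$ words of \emph{working} space, not in the $O(g_{it})$-space data structure.

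Second, your per-rule $y$-fast trie over the partial sums $\sum_{r\le j}(i^\star)^{c_r}|\gexp(B_r)|$ cannot be built at preprocessing time: those keys depend on the iteration index $i^\star$, which is known only at query time, and $k_2-k_1+1$ can be polynomially large, so one cannot materialize a structure per possible $i^\star$. The paper instead evaluates $f_r(i)$ on demand, in $O(d)$ arithmetic operations via $S_A$ and $C_A$, and binary searches over $r$. Third, your claim that power sums (and probes of $S_A$) evaluate in $O(1)$ time treats $c\le d$ as a constant; after Lemma~\ref{lem:dlog} one only has $d=O(\log n)$, and Faulhaber's formula requires $O(c)$ operations on Bernoulli rationals whose numerators and denominators have $O(c\log c)$ bits, i.e., $O(d\log d)$ time per evaluation on a $\Theta(\log n)$-bit RAM. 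With that corrected cost, your plain per-level binary search of $O(\log n)$ probes yields $O(h\log n \cdot d\log d)=O(\log^3 n\log\log n)$ time, overshooting the claimed bound. The paper closes this gap with a telescoped binary search: each search is arranged so that it terminates once the interval length drops below the expansion length of the child it descends to, so the probe count over the entire root-to-leaf descent telescopes to $O(h+\log n)$, giving $O((h+\log n)\,d\log d)=O(\log^2 n\log\log n)$ in total. Your leaf-sweeping extraction phase is sound and corresponds to the paper's $O(\lambda+h)$ reporting step.
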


In fact, our extraction time is $O(\lambda + d\log d\log n + d^2\log d)$ using $O(d^2 \log d)$ working space, which reduces to $O(\lambda+\log n)$ time and $O(1)$ working space for $d=O(1)$ (recall that 1-ISLPs already break the $\delta$ lower-bound), and yields the result in the theorem if $d=O(\log n)$. For larger $d$, we start with a technical result that shows that we can always force $d$ to be $O(\log n)$ without asymptotically increasing the size. From now on in the paper, we will disregard for simplicity the case $k_1 > k_2$ in the rules $A \rightarrow \Pi_{i=k_1}^{k_2} B_1^{i^{c_1}}\cdots B_t^{i^{c_t}}$, as their treatment is analogous to that of the case $k_1 \le k_2$.

\begin{lemma} \label{lem:dlog}
If a $d$-ISLP $G$ generates $T[1\dd n]$, then there is also a $d'$-ISLP $G'$ of the same size that generates $T$, for some $d' \le \log_2 n$.
\end{lemma}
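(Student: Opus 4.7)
The plan is to show that every exponent $c_r$ exceeding $\log_2 n$ in an iteration rule can be replaced by $0$ without changing the string generated by that rule, and without changing the rule's size (which is $2+2t$, independent of the $c_r$'s). Applied uniformly to every iteration rule of $G$, this yields the desired $G'$.

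First I would assume without loss of generality that $G$ is reduced, so that every nonterminal actually appears in some derivation of $T$; then for each nonterminal $A$, $\gexp(A)$ is a substring of $T$, so $|\gexp(A)| \le n$. Fix now any iteration rule $A \rightarrow \prod_{i=k_1}^{k_2} B_1^{i^{c_1}} \cdots B_t^{i^{c_t}}$ with $k_1 \le k_2$ (the case $k_1 > k_2$ is analogous, as the paper notes). For each index $r$ and each $i \in [k_1, k_2]$, $\gexp(A)$ contains $\gexp(B_r)^{i^{c_r}}$ as a factor, so $|\gexp(B_r)| \cdot i^{c_r} \le n$ for every such $i$.

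The key case split is on whether $k_2 \ge 2$. If $k_2 \ge 2$ and some $c_r > \log_2 n$, then taking $i = k_2$ gives $i^{c_r} \ge 2^{c_r} > 2^{\log_2 n} = n$, so $|\gexp(B_r)| \cdot i^{c_r} > n$ unless $|\gexp(B_r)| = 0$. In a standard ISLP without $\varepsilon$-productions this is impossible, so already $c_r \le \log_2 n$; in the extended setting where $\varepsilon$-rules may occur (e.g.\ those produced by Lemma~\ref{lem:edit}), $\gexp(B_r) = \varepsilon$ makes $B_r^{i^{c_r}} = \varepsilon$ regardless of $c_r$, so $c_r$ can be safely set to $0$. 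If instead $k_2 = 1$, then $k_1 = k_2 = 1$, every $i^{c_r} = 1^{c_r} = 1$, and every $c_r$ can again be set to $0$ without changing $\gexp(A)$.

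In all cases the transformation leaves both $\gexp(A)$ and the rule's size $2+2t$ unchanged. Carrying it out across all iteration rules of $G$ produces a $d'$-ISLP $G'$ of the same size generating $T$, with $d' \le \log_2 n$. The only mild subtlety I anticipate is bookkeeping around nonterminals with empty expansion, but this is resolved by the observation above that the exponent $c_r$ is irrelevant whenever $\gexp(B_r)$ is empty, so setting $c_r = 0$ is always safe.
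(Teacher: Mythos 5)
Your proof is correct and follows essentially the same route as the paper's: the same key inequality $n \ge |\gexp(A)| \ge i^{c_r}$ (valid once $|\gexp(B_r)| \ge 1$), the same case split on $k_2 \ge 2$ versus $k_1 = k_2 = 1$, and the same observation that setting exponents to $0$ in the latter case preserves both the expansion and the rule size $2+2t$. Your extra care for nonterminals with empty expansion is a harmless refinement the paper does not need, since its ISLP definition admits no $\varepsilon$-rules.
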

\begin{proof}
For any rule $A = \prod_{i=k_1}^{k_2} B_1^{i^{c_1}}\cdots B_t^{i^{c_t}}$, any $i \in [k_1\dd k_2]$, and any $c_j$, it holds that $n \ge |\gexp(A)| \ge i^{c_j}$, and therefore $c_j \le \log_i n$, which is bounded by $\log_2 n$ for $i \ge 2$. Therefore, if $k_2 \ge 2$, all the values $c_j$ can be bounded by some $d' \le \log_2 n$. A rule with $k_1=k_2=1$ is the same as $A \rightarrow B_1\cdots B_t$, so all values $c_j$ can be set to $0$ without changing the size of the rule at all.\qed
\end{proof}

\subsection{Data Structures}

We define some data structures that extend ISLPs allowing us to efficiently navigate it within $O(g_{it}$) space. Per Lemma~\ref{lem:dlog}, we assume $d = O(\log n)$.

Consider a rule $A \rightarrow \prod_{i=k_1}^{k_2} B_1^{i^{c_1}}\dots B_t^{i^{c_t}}$. Though $t$ can be large, there are only $d+1$ distinct values $c_j$. We will make use of auxiliary polynomials 
$$f_r(i) ~=~ \sum_{j=1}^{r}|\gexp(B_j)|\cdot i^{c_j},$$ 
for $r \in [1,t]$, to navigate within the ``blocks'' $i$: $f_r(i)$ computes cumulative lengths inside the product expression $B_1^{i^{c_1}}\dots B_t^{i^{c_t}}$, up to the variable $B_r$, for a given $i$. 

We now show how to compute any $f_r(i)$ in time $O(d)$ using $O(t)$ space for each $A$.
An array $S_A[1\dd t]$ stores cumulative length information, as follows
$$ S_A[r] = \sum_{1 \le j \le r, c_j = c_r}  |\gexp(B_j)|.$$
That is, $S_A[r]$ adds up the lengths of the symbol expansions up to $B_r$ that must be multiplied by $i^{c_r}$. A second array, $C_A[1\dd t]$, stores the values $c_1,\ldots,c_t$. We preprocess $C_A$ to solve predecessor queries of the form
$$ pred(A,r,c) = \max \{ j \le r,~C_A[j] = c\},$$
that is, the latest occurrence of $c$ in $C_A$ to the left of position $r$, for every $c=0,\ldots,d$. This query can be answered in $O(d)$ time because the elements in $C_A$ are also in $\{ 0,\ldots,d \}$: cut $C_A$ into chunks of length $d+1$, and for each chunk $C_A[(d+1)\cdot j +1 \dd (d+1)\cdot(j+1)]$ store precomputed values $pred(A,(d+1)\cdot j,c)$ for all $c \in \{0,\ldots, d\}$. This requires $O(t)$ space. To compute the values $r_c = pred(A,r,c)$ for all $c$, find the chunk $j = \lceil r/(d+1) \rceil-1$ where $r$ belongs, initialize every $r_c = pred(A,(d+1)\cdot j,c)$ for every $c$ (which is stored with the chunk $j$), and then scan the chunk prefix $C_A[(d+1)\cdot j+1 \dd r]$ left to right, correcting every $r_c \gets k$ if $c=C_A[k]$, for $k=(d+1)\cdot j+1 \dd r$.

 We can then evaluate $f_r(i)$ in $O(d)$ time by computing all values $r_c$ as explained (i.e., the last position to the left of $r$ where the exponent is $c$), and adding up $S_A[r_c]\cdot i^c$ (because $S_A[r_c]$ adds up all $|\gexp(B_j)|$ that must be multiplied by $i^c$ in $f_r(i)$).
We also define the polynomial 
$$f^+(k) ~=~ \sum_{i = k_1}^{k}f_t(i)$$ 
to select a ``block'': $f^+(k)$ computes the cumulative sum of the length of the whole expressions $B_1^{i^{c_1}}\cdots B_t^{i^{c_t}}$ until $i=k$. Note we cannot afford to store all the $k_2-k_1+1$ values $f^+(k)$, but we can exploit the fact that the polynomials $f_t(i)$ have degree at most $d$, and thus $f^+(k)$ is a polynomial on $k$ of degree at most $d+1$. Storing $f^+$ as a polynomial, then, requires only $O(d)$ space, instead of the $O(k)$ space needed to store all of its values. This can still be excessive, however, as it blows the space by an $O(\log n)$ factor in a rule like $A \rightarrow \Pi_{i=k_1}^{k_2} B^{i^d}$, which is of size $4$ but $f^+$ is of degree $d+1$. 

We will instead compute $f^+(k)$ in $O(d)$ arithmetic operations by reusing the same data structures we store for $f_r(i)$: for each $c=0,\ldots,d$, we compute $t_c = pred(A,t,c)$ and $s_c = S_A[t_c]$. Instead of accumulating $s_c \cdot i^c$, however, we accumulate $s_c \cdot \sum_{i=k_1}^{k} i^c = s_c \cdot (p_c(k)-p_c(k_1-1))$, where $p_c(k) = \sum_{i=1}^{k} i^c$. 

We cannot afford storing all the $O(kd)$ values $p_c(k)$, but since there are only $d+1=O(\log n)$ functions $p_c$ and each one is a polynomial of degree $c+1=O(\log n)$, they can be represented as polynomials using $O(\log^2 n)$ integers. Further, they can be computed at query time\footnote{Indeed, the polynomials $p_c(k)$ are independent of the grammar, so they can be computed once for all queries and for all grammars.}, before anything else, in $O(d^2)$ arithmetic operations using, for each $c$, the formula\footnote{See Wolfram Mathworld's {\tt https://mathworld.wolfram.com/BernoulliNumber.html}, Eqs.~(34) and (47).}
$$ p_c(k) ~~=~~ k^c + \frac{1}{c+1} \cdot \sum_{j=0}^c {c+1 \choose j}\, B_j \cdot k^{c+1-j},$$
which is a polynomial on $k$ of degree at most $d+1$. The formula requires $O(c)$ arithmetic operations once the numbers $B_j$ are computed. Those $B_j$ are the Bernoulli (rational) numbers.  All the Bernoulli numbers from $B_0$ to $B_d$ can be computed in $O(d^2)$ arithmetic operations using the recurrence
$$ \sum_{j=0}^{d} {d+1 \choose j}\, B_j ~=~ 0,$$
from $B_0=1$. The numerators and denominators of the rationals $B_j$ fit in $O(j \log j) = O(d \log d) = O(\log n\log d)$ bits,\footnote{See {\tt https://www.bernoulli.org}, sections ``Structure of the denominator'', ``Structure of the nominator'', and ``Asymptotic formulas''.}
so they can be operated in $O(\log d)$ time in a RAM machine with word size $\Theta(\log n)$. Therefore, the total preprocessing time to later compute any $f^+(k)$ is $O(d^2 \log d)$. We note, however, that due to the length of the numerators and denominators of the fractional Bernoulli numbers, the time to compute any $f^+(k)$ is $O(d\log d)$.


\begin{example}
Consider the ISLP of Proposition 2, defined by the rules
$S \rightarrow \prod_{i=1}^{k_2} A^iB$, $A \rightarrow \syma$, and $B \rightarrow \symb$. The polynomials associated with the representation of the rule $S$ are $i^{c_1} = i$ and $i^{c_2} = 1$. Then, we construct the auxiliary polynomials $f_1(i) = |\gexp(A)|i^{c_1} = i$ and $f_2(i) = |\gexp(A)| i^{c_1}+ |\gexp(B)| i^{c_2} = i + 1$. Finally, we construct the auxiliary polynomial $f^+(k) = \sum_{i=1}^k f_2(i) = \sum_{i=1}^k (i+1) = \frac{1}{2} k^2 + \frac{3}{2}k$. Figure~\ref{fig:polys} shows a more complex example to illustrate $C_A$ and $S_A$.
\end{example}

\begin{figure}[t]
\centering
\begin{tikzpicture}[array/.style={matrix of nodes,nodes={draw, minimum size=7mm, fill=gray!30},column sep=-\pgflinewidth, row sep=1mm, nodes in empty cells,
row 1/.style={nodes={draw=none, fill=none, minimum size=7mm}}}]
\matrix[array] (array) {
1 & 2 & 3 & 4 & 5 & 6 & 7 & 8\\
2  & 3  & 6  & 7 & 14 & 13 & 5 & 3\\
1  & 2  & 1  &  0 & 0 & 1 & 2 & 3\\};
\draw (array-2-1.west)--++(0:0mm) node [left] (sa) {$S_A$};
\draw (array-3-1.west)--++(0:0mm) node [left] {$C_A$};
\draw (array-2-8.east)--++(0:0mm) node [right] {$\hspace{0.5cm}f_8(i) = 3i^3 + 5i^2 + 13i + 14$};
\draw (array-3-8.east)--++(0:0mm) node [right] (empty2) {$\hspace{0.5cm}f^+(k) = \frac{9}{12}k^4 + \frac{38}{12}k^3 + \frac{117}{12}k^2 + \frac{256}{12}k$};
\end{tikzpicture}
\caption{Data structures built for the ISLP rule $A \rightarrow \prod_{i=1}^5B^{i}C^{i^2}D^{i}EEE^iB^{i^2}C^{i^3}$, with $|\gexp(B)| = 2$,  $|\gexp(C)| = 3$,  $|\gexp(D)| = 4$, and $|\gexp(E)| = 7$. We show some of the polynomials to be simulated with these data structures. }
\label{fig:polys}
\end{figure}

\subsection{Direct Access in Time $O((h+\log n)\,d\log d)$}

We start with the simplest query: given the data structures of size $O(g_{it})$ defined in the previous sections, return the symbol $T[l]$ given an index $l$.

For SLPs with derivation tree of height $h$, the problem is easily solved in $O(h)$ time by storing the expansion size of every nonterminal, and descending from the root to the corresponding leaf using $|\gexp(B)|$ to determine whether to descend to the left or to the right of every rule $A \rightarrow BC$. This is easy to generalize in RLSLP rules $A \rightarrow B^k$, because every repetition corresponds to the same string, of length $|\gexp(B)|$. The general idea for $d$-ISLPs is similar, but now determining which child to follow in repetition rules is more complex. 

To access the $l$-th character of the expansion of $A \rightarrow \prod_{i=k_1}^{k_2} B_1^{i^{c_1}}\cdots B_t^{i^{c_t}}$ we first find the value $i$ such that $f^+(i-1) < l \le f^+(i)$ by using binary search. Then, we find the value $r$ such that $f_{r-1}(i) < l -  f^+(i-1) \le f_r(i)$ by using binary search in the subindex of the polynomials. We then know that the search follows by $B_r$, with offset $l -  f^+(i-1) - f_{r-1}(i)$ inside $|\gexp(B_r)|^{i^{c_r}}$. The offset within $B_r$ is then easily computed with a modulus, as in RLSLPs. 
Algorithm~\ref{alg:direct-access} gives the details.

We carry out the first binary search so that, for every $i$ we try, if $f^+(i) < l$ we immediately answer $i+1$ if $l \le f^+(i+1)$; instead, if $l \le f^+(i)$, we immediately answer $i$ if $f^+(i-1) < l$. As a result, the search area is initially of length $|\gexp(A)|$ and, if the answer is $i$, the search has finished by the time the search area is of length $\le f^+(i)-f^+(i-1) = f_t(i)$. Thus, there are $O(1+\log(|\gexp(A)|/f_t(i)))$ binary search steps. The second binary search is modified analogously so that it carries out $O(1+\log(f_t(i)/(i^{c_r} |\gexp(B_r)|)))$ steps, for a total of at most $O(1+\log(|\gexp(A)|/|\gexp(B_r)|))$ steps. As the search continues by $B_r$, the sum of binary search steps telescopes to $O(h+\log n)$ on an ISLP of height $h$, and the total time is $O((h+\log n)\,d\log d)=O((h+\log n)\log n\log\log n)$. 


\begin{algorithm}[t]\caption{Direct access for ISLPs in $O((h+\log n)\, d\log d)$ time}\label{alg:direct-access}
\begin{algorithmic}[1]
\Require An ISLP $G$ of height $h$, a variable $A$ of $G$, and a position $l \in [1, |\gexp(A)|]$.
\Ensure The character $\gexp(A)[l]$ at position $l$ in $\gexp(A)$.
\Function{access}{$G, A, l$}
\If{$A \rightarrow a$}
    \State \Return $a$
\EndIf
\If{$A \rightarrow BC$}
    \If{$l \le |\gexp(B)|$}
        \State \Return \Call{access}{$G, B, l$}
    \Else
        \State \Return \Call{access}{$G, C, l-|\gexp(B)|$}
    \EndIf
\EndIf
\If{$A \rightarrow \prod_{i=k_1}^{k_2} B_1^{i^{c_1}} \dots B_t^{i^{c_t}}$}
\State $i \leftarrow \suc([f^+(k_1)\dd f^+(k_2)], l)$
\State $l \leftarrow l - f^+(i-1)$
\State $r \leftarrow \suc([f_1(i)\dd f_t(i)], l)$
\State $l \leftarrow l - f_{r-1}(i)$
\State \Return \Call{access}{$G, B_r, l \bmod |\gexp(B_r)|$}
\EndIf
\EndFunction
\end{algorithmic}
\end{algorithm}

\begin{example}We show how to access the $\symb$ at position 14 of the string $T = \prod_{i=1}^5\syma^i\symb$. Consider the ISLP $G$ and its auxiliary polynomials computed in Example 1. We start by computing $f^+(2) = 5$. As $l > 5$, we go right in the binary search and compute $f^+(4) = 14$. As $l \leq 14$ we go left, compute $f^+(3) = 9$ and find that $i = 4$. Hence, $T[l]$ lies in the expansion of $A^{i}B = A^4B$ at position $l_1 = l - f^+(i-1) = 5$. Then, we compute $f_1(4) = 4$. As $l_1 > 4$, we turn right and compute $f_2(4) = 5$, finding that $r = 2$. Hence, $T[l]$ lies in the expansion of $B^{i^0} = B^1$ at position $l_2 = l_1 - f_{r-1}(i) = 1$.
\end{example}

\subsection{Extracting substrings}

Once we have accessed $T[l]$, it is possible to output the substring $T[l\dd l+\lambda-1]$ in $O(\lambda+h)$ additional time, as we return from the recursion in Algorithm~\ref{alg:direct-access}. We carry the parameter $\lambda$ of the number of symbols (yet) to output, which is first decremented when we finally arrive at line 3 and find the first symbol, $T[l]$, which we now output immediately. From that point, as we return from the recursion, instead of returning the symbol $T[l]$, we return the number $\lambda$ of symbols yet to output, doing some extra work until $\lambda=0$.
\begin{enumerate}
    \item If we return from line 5, we output $\min(\lambda,|\gexp(C)|)$ symbols from nonterminal $C$, by invoking a new procedure $\textsc{report}(G,C,\lambda)$, which returns the new number $\lambda$ of symbols yet to report; this number is then returned by \textsc{access}.
    \item If we return from line 7, we just return the current value of $\lambda$ to the caller.
    \item If we return from line 13, we must report:
    \begin{enumerate}
        \item $i^{c_r} - \lceil l/|\gexp(B_r)|\rceil$ further copies of $\gexp(B_r)$.
        \item $i^{c_s}$ copies of $\gexp(B_s)$, for $s=r+1,\ldots,t$.
        \item the expansions $\gexp(B_1)^{j^{c_1}}\cdots \gexp(B_t)^{j^{c_t}}$, for $j=i+1,\ldots,k_2$.
    \end{enumerate}
    For each expansion $\gexp(C)$ to report, we invoke $\textsc{report}(G,C,\lambda)$ and update $\lambda$ to the new number of symbols yet to report. We stop if $\lambda=0$. 
\end{enumerate}
Procedure $\textsc{report}(G,C,\lambda)$ outputs $\gexp(C)$ in $O(|\gexp(C)|)$ time if $\lambda \ge |\gexp(C)|$, as it simply traverses the leaves of a tree without unary paths. In this case it returns $\lambda-|\gexp(C)|$. Otherwise, it traverses only the first $\lambda$ leaves of the derivation tree of $C$, in time $O(\lambda+h)$, and returns zero. Once a call to \textsc{report} returns zero, it is never called again; therefore the total time we spend is $O(\lambda+h)$.



\section{Balancing ISLPs} \label{sec:bal}

We show that any $d$-ISLP can be balanced so that its derivation tree is of height $O(\log n)$. 
Actually, we introduce a new type of SLP, which allows us to prove a more general balancing result that subsumes ISLPs.

\begin{definition} A \emph{generalized straight-line program} (GSLP) is an SLP that allows special rules of the form $A \rightarrow x$, where $x$ is a {\em program} (in any Turing-complete language) of length $|x|$ whose output $\mathtt{OUT}(x)$ is a nonempty sequence of variables, none of which can reach $A$. The rule $A \rightarrow x$ contributes $|x|$ to the size of the GSLP; the standard SLP rules contribute as usual. If it holds for all special rules that no variable appears exactly once inside $\mathtt{OUT}(x)$, then the GSLP is said to be \emph{balanceable}.
\end{definition}

We can choose any desired language to describe the programs $x$. Though in principle $|x|$ can be taken as the Kolmogorov complexity of $\mathtt{OUT}(x)$, we will focus on very simple programs and on the asymptotic value of $x$. In particular, RLSLPs allow rules of the form $A \rightarrow B^k$ of size 2, and we can have a program of size $O(1)$ that outputs $k$ copies of $B$; ISLPs allow rules of the form $\prod_{k_1}^{k_2} B_1^{i^c_1}\cdots B_t^{i^{c_t}}$ of size $2+2t$, and we can have a program of size $O(t)$ that writes the corresponding $f^+(k_2)$ symbols. Note that in both cases the GSLP is balanceable as long as special rules satisfy $k>1$ (for RLSLPs), or if $k_1 \neq k_2$ (for ISLPs); otherwise they can be replaced by alternative rules of the same asymptotic size.

We will prove that any balanceable GSLP can be balanced without increasing its asymptotic size. Our proof generalizes that of Ganardi et al.~\cite[Thm.~1.2]{GJL2021} for SLPs in a similar way to how it was extended to balance RLSLPs \cite{NOU2022}. Just as Ganardi et al., in this section we will allow SLPs to have rules of the form $A \rightarrow B_1 \cdots B_t$, of size $t$, where each $B_j$ is a terminal or a nonterminal; this can be converted into a strict SLP of the same asymptotic size.

A \emph{directed acyclic graph} (DAG) is a directed multigraph $D = (V,E)$ without cycles (nor loops). We denote by $|D|$ the number of edges in this DAG. For our purposes, we assume that any DAG has a distinguished node $r$ called the \emph{root}, satisfying that any other node can be reached from $r$ and $r$ has no incoming edges. We also assume that if a node has $k$ outgoing edges, they are numbered from $1$ to $k$, so edges are of the form $(u,i,v)$.  The \emph{sink nodes} of a DAG are the nodes without outgoing edges. The set of sink nodes of $D$ is denoted by $W$. We denote the number of paths from $u$ to $v$ as $\pi(u, v)$, and $\pi(u, V) = \sum_{v \in V}\pi(u, v)$ for a set $V$ of nodes. The number of paths from the root to the sink nodes is $n(D) = \pi(r, W)$.

One can interpret an SLP $G$ generating a string $T$ as a DAG $D$: There is a node for each variable in the SLP, the root node is the initial variable, variables of the form $A \rightarrow a$ are the sink nodes, and a variable with rule $A \rightarrow B_1B_2\dots B_t$ has outgoing edges $(A, i, B_i)$ for $i \in [1,t]$. Note that if $D$ is a DAG representing $G$, then $n(D) = |\gexp(G)| = |T|$.

\begin{definition}{(Ganardi et al. \cite[page 5]{GJL2021})}
Let $D$ be a DAG, and define the pairs $\lambda(v) = (\floor{\log_2\pi(r,v)}, \floor{\log_2\pi(v, W))})$. The \emph{symmetric centroid decomposition (SC-decomposition)} of a DAG $D$ produces a set of edges between nodes with the same $\lambda$ pairs defined as $E_{scd}(D) = \{(u, i, v) \,|\, \lambda(u) = \lambda(v)\}$,
partitioning $D$ into disjoint paths called \emph{SC-paths} (some of them possibly of length 0).
\end{definition}

The set $E_{scd}$ can be computed in $O(|D|)$ time. If $D$ is the DAG of an SLP $G$, then $|D|$ is $O{(|G|)}$. The following lemma justifies the name ``SC-paths''.

\begin{lemma}{(Ganardi et al. \cite[Lemma~2.1]{GJL2021})}\label{lemma:scd}
Let $D = (V , E)$ be a DAG. Then every node has at most one outgoing and at most one incoming edge from $E_{scd}(D)$. Furthermore, every path from the root r to a sink node contains at most $2\log_2 n(D)$ edges that do not belong to $E_{scd}(D)$.
\end{lemma}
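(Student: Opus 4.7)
The plan is to split the proof into two independent parts, exactly along the ``furthermore'' boundary: first the local claim that every node has at most one outgoing and at most one incoming SC-edge, and then the global claim that any root-to-sink path traverses at most $2\log_2 n(D)$ non-SC-edges. Throughout, I would lean on two elementary monotonicity/additivity facts about the path-count functions: for any edge $(u, i, v) \in E$ we have $\pi(r, v) \ge \pi(r, u)$ and $\pi(u, W) \ge \pi(v, W)$ (extending a path by this edge injects paths to $u$ into paths to $v$, and symmetrically for the sink side); moreover, $\pi(u, W) = \sum_{(u,i,v) \in E} \pi(v, W)$ and $\pi(r, v) = \sum_{(u,i,v) \in E} \pi(r, u)$.

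For the first part, I would argue by contradiction. Suppose $u$ has two distinct outgoing SC-edges $(u, i, v_i)$ and $(u, j, v_j)$, so that the second coordinates of $\lambda(v_i)$, $\lambda(v_j)$, and $\lambda(u)$ all coincide at some value $c = \lfloor \log_2 \pi(u, W) \rfloor$. Then both $\pi(v_i, W) \ge 2^c$ and $\pi(v_j, W) \ge 2^c$, and by additivity $\pi(u, W) \ge \pi(v_i, W) + \pi(v_j, W) \ge 2^{c+1}$, contradicting $\lfloor \log_2 \pi(u, W) \rfloor = c$. The incoming case is completely symmetric using $\pi(r, \cdot)$ and its additivity over incoming edges; the first coordinate of $\lambda$ plays the analogous role.

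For the second part, fix a root-to-sink path $r = v_0, v_1, \ldots, v_k = w$ with $w \in W$. By the monotonicity noted above, the first coordinate $\lfloor \log_2 \pi(r, v_s) \rfloor$ is non-decreasing as $s$ grows from $0$ to $k$, and the second coordinate $\lfloor \log_2 \pi(v_s, W) \rfloor$ is non-increasing. Since $\pi(r, v_s) \le n(D)$ and $\pi(v_s, W) \le n(D)$, each coordinate ranges inside $\{0, 1, \ldots, \lfloor \log_2 n(D) \rfloor\}$, so each can strictly change at most $\log_2 n(D)$ times along the path. Any non-SC-edge $(v_s, v_{s+1})$ has $\lambda(v_s) \neq \lambda(v_{s+1})$, which by monotonicity forces at least one coordinate to change strictly at that step. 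Summing the two budgets gives at most $2\log_2 n(D)$ non-SC-edges.

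The main obstacle I anticipate is purely cosmetic: being careful that the floor-of-log argument in part one truly forces the contradiction (the step $\pi(v_i,W)+\pi(v_j,W) \ge 2^{c+1}$ is what makes the whole decomposition work), and that in part two the ``at least one coordinate strictly changes'' observation is correctly charged to the corresponding budget without double-counting an edge that changes both coordinates. No additional structural property of $D$ beyond being a DAG with a root and identified sink set is needed, and the $O(|D|)$ computability claim preceding the lemma is not part of what has to be proved here.
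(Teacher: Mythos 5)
Your proof is correct. The paper itself does not prove this lemma at all---it is quoted verbatim from Ganardi et al.~\cite[Lemma~2.1]{GJL2021}---and your two-part argument (additivity of path counts forcing the $2^{c+1}$ contradiction for the in/out-degree claim, and the monotone two-coordinate budget of size $\log_2 n(D)$ each for counting non-SC-edges along a root-to-sink path) is essentially the standard proof given in that reference, with no gaps: the only points worth stating explicitly are that $\pi(r,v)\ge 1$ and $\pi(v,W)\ge 1$ for every node $v$ (so the floors of the logarithms are well defined and the products $\pi(r,v)\cdot\pi(v,W)\le n(D)$ bound both coordinates), both of which follow from reachability from the root and finiteness of the DAG.
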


Note that the sum of the lengths of all SC-paths is at most the number of nodes of the DAG, or equivalently, the number of variables of the SLP.

The following definition and technical lemma are needed to construct the building blocks of our balanced GSLPs.

\begin{definition}{(Ganardi et al. \cite[page~7]{GJL2021})}
A \emph{weighted string} is a string $T \in \Sigma^*$ equipped with a \emph{weight function} $||\cdot||: \Sigma \rightarrow \mathbb{N} \backslash \{0\}$, which is extended homomorphically. If $A$ is a variable in an SLP $G$, then we write $||A||$ for the weight of the string $\gexp(A)$ derived from $A$.
\end{definition}

\begin{lemma}{(Ganardi et al. \cite[Proposition~2.2]{GJL2021})}\label{lemma:weighted}
For every non-empty weighted string $T$ of length $n$ one can construct in linear time an SLP $G$ generating $T$ with the following properties:
\begin{itemize}
    \item $G$ contains at most $3n$ variables
    \item All right-hand sides of $G$ have length at most 4
    \item $G$ contains suffix variables $S_1 , . . . , S_n$ producing all non-trivial suffixes of $T$
    \item every path from $S_i$ to some terminal symbol $a$ in the derivation tree of $G$ has length at most
$3 + 2(\log_2 ||S_i|| - \log_2 ||a||)$
\end{itemize}
\end{lemma}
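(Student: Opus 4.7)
The plan is to adapt the incremental balanced-grammar construction of Ganardi et al.\ for unweighted strings so that the balancing invariant is driven by weights rather than by lengths. I would scan $T$ from right to left and maintain a stack $[V_1,\ldots,V_k]$ (top first) of already-built variables whose concatenation equals the current suffix $T[i\dd n]$ and whose weights grow geometrically toward the bottom, say $||V_{j+1}|| \ge 2\,||V_j||$. This invariant alone forces $k = O(\log ||T[i\dd n]||)$, which is the root cause of the logarithmic path-length bound we are after.

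To advance from position $i+1$ to $i$, I would push a fresh variable for $T[i]$ on top of the stack and, while the geometric invariant is violated, create a new size-$2$ rule $W \to V_1 V_2$ and replace the top two entries by $W$. A standard potential argument with $\Phi = \sum_j \log_2 ||V_j||$ keeps the amortized number of merges per step down to $O(1)$, so only $O(n)$ merge variables are created overall. The suffix variable $S_i$ is then assembled from the current stack by a right-leaning spine $Y_k = V_k$, $Y_j \to V_j Y_{j+1}$, and $S_i = Y_1$, all rules of size $2$. The delicate point is that a naive spine would produce $\Omega(n\log n)$ variables, so I would share spine tails across positions: whenever $[V_j,\ldots,V_k]$ is unchanged from the previous step, the same $Y_j$ is reused. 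Because a merge at stack level $j$ only invalidates $Y_1,\ldots,Y_j$, and the amortized potential argument already bounds the number of merges, a parallel amortized count charges every newly created spine variable to a merge event. This yields $O(n)$ spine variables, and hence at most $3n$ variables overall (including the $n$ terminal variables).

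Finally I would verify the depth bound. A path from $S_i$ to a terminal $a$ inside some stack element $V_j$ has at most $j-1+O(1)$ spine edges followed by the derivation path inside $V_j$. The geometric invariant gives $||V_j|| \ge 2^{j-1}||V_1||$, whence $j-1 \le \log_2 ||S_i|| - \log_2 ||V_j||$. Because every merge that built $V_j$ was triggered by the same geometric rule, $V_j$ is itself a weight-balanced binary tree in which the path from its root to $a$ has length at most $2(\log_2 ||V_j|| - \log_2 ||a||)+O(1)$; the factor $2$ absorbs the possibly unbalanced merges where one child can be up to twice as heavy as the other. Adding the two contributions yields the target $3 + 2(\log_2 ||S_i|| - \log_2 ||a||)$. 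The step I expect to be the main obstacle is precisely the shared-spine bookkeeping: getting the combined potential for merges and spines to close with a leading constant of $3$ (rather than a larger multiple), and confirming that spine reuse does not secretly inflate the depth beyond the additive $3$. The remaining pieces, namely the linear-time stack implementation and the conversion of the few size-$3$ boundary rules into right-hand sides of length at most $4$, are routine.
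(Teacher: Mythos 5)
A preliminary remark: the paper you are being compared against does not prove this lemma at all---it imports it verbatim as Proposition~2.2 of Ganardi et al.~\cite{GJL2021}---so the only meaningful comparison is with that original proof, whose construction is different from yours. Unfortunately, your construction is not a correct alternative: it has a fatal flaw, visible already in the algebra of your last paragraph. From the invariant $||V_j|| \ge 2^{j-1}||V_1||$ you may conclude $j-1 \le \log_2||V_j|| - \log_2||V_1||$, i.e.\ that the stack has logarithmic height, but \emph{not} the inequality you state, $j-1 \le \log_2||S_i|| - \log_2||V_j||$; that one is false precisely for the heavy blocks. Your invariant places the heavy blocks at the \emph{bottom} of the stack, hence \emph{deepest} in the right-leaning spine $Y_j \rightarrow V_j Y_{j+1}$, which is exactly the wrong place: the lemma requires heavy symbols to be close to the root. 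A concrete counterexample needs no merges at all: let $T$ have weights $1, 2, 4, \ldots, 2^{k-1}$ from left to right. Scanning right to left, each newly pushed symbol weighs exactly half the current top, so the invariant holds at every step and the final stack consists of the $k$ singleton terminals. Your spine for $S_1$ then puts the terminal of weight $2^{k-1}$ (more than half of $||S_1|| = 2^k-1$) at depth about $k$, while the lemma demands depth at most $3 + 2\left(\log_2(2^k-1) - (k-1)\right) < 5$. The bound fails for every $k \ge 6$.

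The merge rule suffers from the same disease inside blocks: push a symbol of weight $M = 2^{2k}$ onto the stack $1, 2, 4, \ldots, 2^{k-1}$; the invariant is violated repeatedly, and the cascade $W_1 \rightarrow U V_1$, $W_2 \rightarrow W_1 V_2, \ldots$ performs $k$ merges, leaving the heavy terminal at depth $k$ inside a block of weight $M + 2^k - 1$, where weight-balance would require depth $O(1)$. So the claim that each block is internally balanced ``where one child can be up to twice as heavy as the other'' is false: in a cascade the imbalance is unbounded, because the new element can be arbitrarily heavier than the stack prefix it swallows. Ironically, the parts you flagged as the main obstacles are the parts that do work: spine-tail sharing costs one new variable per position (the tail $Y_{m+1}$ always exists from an earlier step), there are at most $n$ merges since each merge shrinks the stack, and the $3n$ count and linear time follow. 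What is missing is the heart of the matter: a correct construction must guarantee that the number of spine steps to reach a block, plus the depth inside the block, is bounded by how much \emph{lighter} the target is than the suffix. That forces both an orientation of the comb in which heavier blocks are reached sooner, and a merge discipline that never buries a heavy symbol under an accumulation of light ones; Ganardi et al.'s construction (which also exploits right-hand sides of length up to $4$ to keep the additive constant at $3$) is engineered around exactly this, and your greedy top-of-stack merging cannot be repaired by tuning constants.
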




\begin{theorem}\label{thm:balancing}Given a balanceable GSLP $G$ generating a string $T$, it is possible to construct an equivalent GSLP $G'$ of size $O(|G|)$ and height $O(\log n)$.\end{theorem}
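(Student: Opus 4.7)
The plan is to adapt Ganardi et al.'s SC-decomposition technique to GSLPs, treating special rules as opaque ``black boxes'' at the ends of SC-paths. I first interpret $G$ as a multigraph DAG $D$: every variable is a node, every occurrence of a variable $B$ in the right-hand side of a standard rule for $A$ yields an edge $(A,\cdot,B)$, and every occurrence of $B$ in $\mathtt{OUT}(x)$ for a special rule $A \rightarrow x$ likewise yields a distinct edge. With this interpretation, $n(D) = |T| = n$. The crucial observation is that balanceability prevents any SC-edge from exiting a special-rule node: if $A \rightarrow x$ is special and $B$ appears in $\mathtt{OUT}(x)$, then by hypothesis $B$ appears at least twice, so $\pi(A,B) \geq 2$ and hence $\pi(r,B) \geq 2\pi(r,A)$. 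This gives $\lfloor \log_2 \pi(r,B) \rfloor \geq \lfloor \log_2 \pi(r,A) \rfloor + 1$ and therefore $\lambda(A) \neq \lambda(B)$. Consequently, every SC-path consists solely of edges from standard SLP rules of $G$, and any special-rule node may sit only at the endpoint of an SC-path (possibly as a degenerate path of length $0$).

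I then process each SC-path $u_0 \rightarrow u_1 \rightarrow \cdots \rightarrow u_p$ exactly as in the standard SLP proof: collect the off-path siblings at each interior $u_i$ (which come from standard rules and are therefore well-defined) into a weighted string with weights $\|B\| = |\gexp(B)|$, and invoke Lemma~\ref{lemma:weighted} to produce a balanced SLP with suffix variables $S_1, \ldots, S_p$ generating every non-trivial suffix. The original rules for $u_0, \ldots, u_{p-1}$ are replaced by these new rules, while $u_p$ (which may be a standard, terminal, or special rule) is kept unchanged. Since no SC-edge leaves a special-rule node, its special rule is never modified. Summing Lemma~\ref{lemma:weighted}'s bound of ``$3\cdot\text{(path length)}$ new variables'' over all SC-paths, plus the preserved special rules, yields $|G'| = O(|G|)$, because the total length of all SC-paths is bounded by the number of standard-rule edges in $D$, which is at most $|G|$.

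For the height, any root-to-leaf trajectory in $G'$ alternates between traversals through balanced SC-path replacements and crossings of non-SC-edges. By Lemma~\ref{lemma:scd}, the number of non-SC-edges on such a trajectory is at most $2\log_2 n(D) = O(\log n)$. Inside each SC-path replacement, Lemma~\ref{lemma:weighted} guarantees a height of at most $3 + 2(\log_2 \|u_0\| - \log_2 \|v\|)$, where $v$ is the descendant reached; these differences telescope along the successive SC-paths of the trajectory, producing total height $O(\log n)$.

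The main obstacle I anticipate is verifying the multigraph interpretation of special rules carefully: one must confirm that defining $\pi(A,B)$ as the count-with-multiplicity of occurrences of $B$ in $\mathtt{OUT}(x)$ yields both $n(D) = n$ and the strict inequality $\pi(r,B) \geq 2\pi(r,A)$ that drives the key $\lambda$-separation argument. A second point to check is that the off-path ``weighted string'' remains well-defined throughout an SC-path, which follows cleanly from the fact that only standard SLP rules appear as interior nodes of any SC-path, but this requires stating and justifying explicitly.
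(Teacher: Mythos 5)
Your proposal is correct and takes essentially the same approach as the paper: interpret the GSLP as a DAG with multiplicity edges, apply the SC-decomposition, use balanceability to show that a special rule $A \rightarrow x$ can never have an outgoing SC-edge (so special rules sit only at SC-path endpoints and are left untouched), and then run Ganardi et al.'s path-processing together with the telescoping height bound via Lemma~\ref{lemma:scd}. The only differences are cosmetic: you separate $\lambda(A) \neq \lambda(B)$ through the root-side component $\pi(r,\cdot)$ while the paper uses the sink-side component $\pi(\cdot,W)$ (both are valid), and your one-line summary of the path processing mentions only suffix variables, whereas the construction you are invoking (and the paper spells out) splits the off-path siblings into a left sequence handled with suffix variables and a reversed right sequence handled with prefix variables, so that each $A_i$ rewrites as $w_l A_p w_r$ with a rule of length at most~3.
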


\begin{proof}Transform the GSLP $G$ into an SLP $H$ by replacing their special rules $A \rightarrow x$ by $A \rightarrow \mathtt{OUT}(x)$, and then obtain the SC-decomposition $E_{scd}(D)$ of the DAG $D$ of $H$. Observe that the SC-paths of $H$ use the same variables of $G$, so it holds that the sum of the lengths of all the SC-paths of $H$ is less than the number of variables of $G$. Also, note that any special variable $A \rightarrow x$ of $G$ is necessarily the endpoint (i.e., the last node of a directed path) of an SC-path in $D$. To see this note that $\lambda(A) \not= \lambda(B)$ for any $B$ that appears in $\mathtt{OUT}(x)$, because $\log_2 \pi(A,W) \geq \log_2 (|\mathtt{OUT}(x)|_B \cdot \pi(B,W)) \ge 1 + \log_2 \pi(B,W)$ where $ |\mathtt{OUT}(x)|_B \ge 2$ because $G$ is balanceable. This implies that the balancing procedure of Ganardi et al. on $H$, which transforms the rules of variables that are not the endpoint of an SC-path in the DAG $D$, will not touch variables that were originally special variables in $G$.

Let $\rho=(A_0 , d_0 , A_1 ), (A_1 , d_1 , A_2 ), \dots , (A_{p-1} , d_{p-1} , A_p)$ be an SC-path of $D$. It holds that for each $A_i$ with $i \in [0\dd p-1]$, in the SLP $H$ its rule goes to two distinct variables, one to the left and one to the right. Thus, for each variable $A_i$, with $i \in [0\dd p-1]$, there is a variable $A_{i+1}'$ that is not part of the path. Let $A_1'A_2'\dots A_p'$ be the sequence of these variables. Let $L = L_1L_2\dots L_s$ be the subsequence of left variables of the previous sequence. Then construct an SLP of size $O(s) \subseteq O(p)$ for the sequence $L$ (seen as a string) as in Lemma \ref{lemma:weighted}, using $|\gexp(L_i)|$ in $H$ as the weight function. In this SLP, any path from the suffix nonterminal $S_i$ to a variable $L_j$ has length at most $3 + 2(\log_2 ||S_i|| - \log_2 ||L_j||)$. Similarly, construct an SLP of size $O(t)\subseteq O(p)$ for the sequence $R = R_1R_2\dots R_t$ of right symbols in reverse order, as in Lemma \ref{lemma:weighted}, but with prefix variables $P_i$ instead of suffix variables. Each variable $A_i$, with $i \in [0\dd p-1]$, derives the same string as $w_{l}A_pw_{r}$, for some suffix $w_{l}$ of $L$ and some prefix $w_{r}$ of $R$. We can find rules deriving these prefixes and suffixes in the SLPs produced in the previous step, so for any variable $A_i$, we construct an equivalent rule of length at most 3. Add these equivalent rules, and the left and right SLP rules to a new GSLP $G'$. Do this for all SC-paths. Finally, add the original terminal variables and special variables (which are left unmodified) of the GSLP $G$, so $G'$ is a GSLP equivalent to $G$. 

The SLP constructed for $L$ has all its rules of length at most 4, and $3s \leq 3p$ variables. The same happens with $R$. The other constructed rules also have a length of at most 3, and there are $p$ of them. Summing over all SC-paths, we have $O(|G|)$ size. The special variables cannot sum up to more than $O(|G|)$ size. Thus, the GSLP $G'$ has size $O(|G|)$.

Any path in the derivation tree of $G'$ is of length $O(\log n)$. To see why, let $A_0,\dots,A_p$ be an SC-path. Consider a path from a variable $A_i$ to an occurrence of a variable that is in the right-hand side of $A_p$ in $G'$. Clearly, this path has length at most 2. Now consider a path from $A_i$ to a variable $A_j'$ in $L$ with $i < j \leq p$. By construction this path is of the form $A_i\rightarrow S_k \rightarrow^* A_j'$ for some suffix variable $S_k$ (if the occurrence of $A_j'$ is a left symbol), and its length is at most $1 + 3 + 2(\log_2 ||S_k|| - \log_2 ||A_j'||) \leq4+2\log_2||A_i||-2\log_2||A_{j}'||$. Analogously, if $A_j'$ is a right variable, the length of the path is bounded by $1 + 3 + 2(\log_2 ||P_k|| - \log_2 ||A_j'||) \leq4+2\log_2||A_i||-2\log_2||A_{j}'||$. Finally, consider a maximal path to a leaf in the derivation tree of $G'$. Factorize it as
$$A_0 \rightarrow^* A_1 \rightarrow^* \dots \rightarrow^* A_{k}$$ 
where each $A_i$ is a variable of $H$ (and also of $G$). Paths $A_i \rightarrow^* A_{i+1}$ are like those defined in the paragraph above, satisfying that their length is bounded by $4+2\log_2||A_i||-2\log_2||A_{i+1}||$. Observe that between each $A_i$ and $A_{i+1}$, in the DAG $D$ there is almost an SC-path, except that the last edge is not in $E_{scd}$. The length of this path is at most
$$\sum_{i=0}^{k-1}(4+2\log_2||A_i||-2\log_2||A_{i+1}||) \leq 4k + 2\log_2||A_0|| - 2\log_2||A_k||$$

By Lemma \ref{lemma:scd}, $k \leq 2\log_2 n$, which yields the upper bound $O(\log n)$. 


To have standard SLP rules of size at most two, delete rules in $G'$ of the form $A \rightarrow B$ (replacing all $A$'s by $B$'s), and note that rules of the form $A \rightarrow BCDE$ or $A \rightarrow BCD$ can be decomposed into rules of length $2$, with only a constant increase in size and depth.  \qed
\end{proof}

By the above theorem, Lemma \ref{lem:dlog}, and because ISLPs can be made balanceable, we obtain the following.

\begin{corollary}
Given a $d$-ISLP $G$ generating a string $T$, it is possible to construct an equivalent $d'$-ISLP $G'$ of size $O(|G|)$, with $d' \le d$, $d' = O(\log n)$,  and height $h' = O(\log n)$.
\end{corollary}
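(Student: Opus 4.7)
The plan is to chain the three ingredients already developed in the paper: the degree reduction of Lemma \ref{lem:dlog}, a local rewriting that makes the ISLP balanceable as a GSLP, and Theorem \ref{thm:balancing}.

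First, I would apply Lemma \ref{lem:dlog} to $G$. Its proof only lowers the exponents $c_j$ (capping them at $\log_2 n$ when $k_2 \ge 2$, and zeroing them out when $k_1 = k_2 = 1$), so the output is an equivalent ISLP of the same size whose degree $d'$ satisfies both $d' \le d$ and $d' \le \log_2 n$. This already secures the two degree bounds claimed in the corollary.

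Second, I would transform the resulting ISLP into a balanceable GSLP. In an iteration rule $A \rightarrow \prod_{i=k_1}^{k_2} B_1^{i^{c_1}} \cdots B_t^{i^{c_t}}$, each $B_j$ appears $\sum_{i=k_1}^{k_2} i^{c_j} \ge k_2-k_1+1$ times in the output, so the rule is already balanceable whenever $k_1 \ne k_2$. The only offending case is $k_1 = k_2 = k$, where the rule collapses to $B_1^{k^{c_1}} \cdots B_t^{k^{c_t}}$. For such rules I replace every factor $B_j^{k^{c_j}}$ with $k^{c_j} \ge 2$ by a fresh RLSLP-style rule $A_j \rightarrow B_j^{k^{c_j}}$ (balanceable since its exponent is $\ge 2$), leave $B_j$ in place when $k^{c_j} = 1$, and rewrite $A$ as the plain concatenation $A \rightarrow A'_1 \cdots A'_t$. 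This local surgery replaces a size-$(2+2t)$ rule by at most $t$ size-$2$ RLSLP rules and one concatenation of length $t$, so it preserves size within a constant factor; it introduces no new positive exponents (the auxiliary rules have degree $0$), so the degree $d'$ is unchanged; and it clearly preserves the generated string. The result is a balanceable $d'$-ISLP of size $O(|G|)$ that is equivalent to $G$.

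Third, I would apply Theorem \ref{thm:balancing} to this balanceable GSLP. Its proof shows that every special rule is necessarily the endpoint of an SC-path in the underlying DAG and is therefore left untouched by the rebalancing, which only produces ordinary SLP rules along the SC-paths. Consequently the output $G'$ is still a $d'$-ISLP (the iteration rules are preserved verbatim, and the auxiliary RLSLP rules introduced in step two are $0$-degree iteration rules), and the theorem gives it size $O(|G|)$ and derivation-tree height $O(\log n)$. I do not expect a real obstacle: the only delicate point is to check that the rewriting in step two simultaneously achieves balanceability, a constant-factor size blow-up per rule, and preservation of the degree bound $d'$, all of which are immediate from the construction.
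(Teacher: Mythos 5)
Your proposal is correct and follows essentially the same route as the paper: apply Lemma~\ref{lem:dlog} to cap the degree at $O(\log n)$ without increasing size, handle the only non-balanceable case $k_1=k_2$ by rewriting the rule as a concatenation of run-length (degree-$0$) rules of the same asymptotic size, and then invoke Theorem~\ref{thm:balancing}, whose proof leaves special rules intact so the output is still a $d'$-ISLP. The paper states this chaining in one sentence; your filling-in of the $k_1=k_2$ replacement matches what the paper sketches when it remarks that such rules ``can be replaced by alternative rules of the same asymptotic size.''
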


\section{Conclusions}

We have introduced a new extension to straight-line programs (SLPs) and run-length SLPs (RLSLPs) called {\em iterated SLPs (ISLPs)}. ISLPs permit so-called {\em iteration rules} of the form $A \rightarrow \Pi_{i=k_1}^{k_2} B_1^{i^{c_1}}\cdots B_t^{i^{c_t}}$, of size $O(t)$. While it had already been shown that the lower-bound (and unreachable) measure $\delta$, which was text-wise smaller than every preceding measure of repetitiveness, could be outperformed by a {\em reachable} measure (L-systems) on some text families \cite{NU21,NU23}, the size $g_{it}$ of the smallest ISLP generating a text is the first {\em accessible} measure that also outperforms $\delta$ (by the same margin, $O(\delta/\sqrt{n})$ on a text of length $n$). 

With SLPs or RLSLPs representing a text $T[1\dd n]$, an arbitrary symbol of $T$ can be accessed in $O(\log n)$ time. We have shown that, just as SLPs and RLSLPs \cite{GJL2021,NOU2022}, ISLPs can be balanced without asymptotically increasing their space, and used it to devise an algorithm to access any arbitrary text position in time $O(\log^2 n\log\log n)$ within $O(g_{it})$ space. They are also similarly resistant to edits and other text manipulations.


%
%
%
\bibliographystyle{splncs04}
\bibliography{bibliography}

\begin{thebibliography}{10}
\providecommand{\url}[1]{\texttt{#1}}
\providecommand{\urlprefix}{URL }
\providecommand{\doi}[1]{https://doi.org/#1}

\bibitem{AFI2023}
Akagi, T., Funakoshi, M., Inenaga, S.: Sensitivity of string compressors and repetitiveness measures. Information and Computation  \textbf{291},  104999 (2023)

\bibitem{AlloucheShallit_ThueMorse}
Allouche, J.P., Shallit, J.: The ubiquitous prouhet-thue-morse sequence. In: Ding, C., Helleseth, T., Niederreiter, H. (eds.) Sequences and their Applications. pp. 1--16. Springer London, London (1999)

\bibitem{BFIKMN21}
Bannai, H., Funakoshi, M., I, T., K{\"{o}}ppl, D., Mieno, T., Nishimoto, T.: A separation of {\(\gamma\)} and b via {Thue-Morse} words. In: Proc. 28th International Symposium on String Processing and Information Retrieval (SPIRE). pp. 167--178 (2021)

\bibitem{BLRSRW15}
Bille, P., Landau, G.M., Raman, R., Sadakane, K., Rao, S.S., Weimann, O.: Random access to grammar-compressed strings and trees. SIAM Journal on Computing  \textbf{44}(3),  513--539 (2015)

\bibitem{BGLP2018}
Bille, P., Gagie, T., Gørtz, I.L., Prezza, N.: A separation between {RLSLPs} and {LZ77}. Journal of Discrete Algorithms  \textbf{50},  36--39 (2018)

\bibitem{BW94}
Burrows, M., Wheeler, D.: A block sorting lossless data compression algorithm. Tech. Rep.~124, Digital Equipment Corporation (1994)

\bibitem{CLLPPSS05}
Charikar, M., Lehman, E., Liu, D., Panigrahy, R., Prabhakaran, M., Sahai, A., Shelat, A.: The smallest grammar problem. IEEE Transactions on Information Theory  \textbf{51}(7),  2554--2576 (2005)

\bibitem{CEKNP20}
Christiansen, A.R., Ettienne, M.B., Kociumaka, T., Navarro, G., Prezza, N.: Optimal-time dictionary-compressed indexes. ACM Transactions on Algorithms  \textbf{17}(1),  article 8 (2020)

\bibitem{FRMU2023}
Fici, G., Romana, G., Sciortino, M., Urbina, C.: {On the Impact of Morphisms on BWT-Runs}. In: Bulteau, L., Lipt\'{a}k, Z. (eds.) 34th Annual Symposium on Combinatorial Pattern Matching (CPM 2023). Leibniz International Proceedings in Informatics (LIPIcs), vol.~259, pp. 10:1--10:18. Schloss Dagstuhl -- Leibniz-Zentrum f{\"u}r Informatik, Dagstuhl, Germany (2023)

\bibitem{GNP18}
Gagie, T., Navarro, G., Prezza, N.: Fully-functional suffix trees and optimal text searching in {BWT}-runs bounded space. Journal of the ACM  \textbf{67}(1),  article 2 (2020)

\bibitem{Gallant1982}
Gallant, J.K.: String Compression Algorithms. Ph.D. thesis, Princeton University (1982)

\bibitem{GJL2021}
Ganardi, M., Je\.{z}, A., Lohrey, M.: Balancing straight-line programs. J. ACM  \textbf{68}(4) (jun 2021)

\bibitem{GILPST2021}
Giuliani, S., Inenaga, S., Lipt{\'a}k, Z., Prezza, N., Sciortino, M., Toffanello, A.: Novel results on the number of runs of the burrows-wheeler-transform. In: Bure{\v{s}}, T., Dondi, R., Gamper, J., Guerrini, G., Jurdzi{\'{n}}ski, T., Pahl, C., Sikora, F., Wong, P.W. (eds.) SOFSEM 2021: Theory and Practice of Computer Science. pp. 249--262. Springer International Publishing, Cham (2021)

\bibitem{GILRSU2023}
Giuliani, S., Inenaga, S., Lipt{\'a}k, Z., Romana, G., Sciortino, M., Urbina, C.: Bit catastrophes for the burrows-wheeler transform. In: Drewes, F., Volkov, M. (eds.) Developments in Language Theory. pp. 86--99. Springer Nature Switzerland, Cham (2023)

\bibitem{KARHUMAKI1983}
Karhum{\"a}ki, J.: On cube-free $\omega$-words generated by binary morphisms. Discrete Applied Mathematics  \textbf{5}(3),  279--297 (1983)

\bibitem{KK23}
Kempa, D., Kociumaka, T.: Collapsing the hierarchy of compressed data structures: Suffix arrays in optimal compressed space. CoRR  \textbf{2308.03635} (2023)

\bibitem{KP18}
Kempa, D., Prezza, N.: At the roots of dictionary compression: String attractors. In: Proc. 50th Annual ACM Symposium on the Theory of Computing (STOC). pp. 827--840 (2018)

\bibitem{KS22}
Kempa, D., Saha, B.: An upper bound and linear-space queries on the {LZ-End} parsing. In: Proc. ACM-SIAM Symposium on Discrete Algorithms (SODA). pp. 2847--2866 (2022)

\bibitem{KK20}
Kempa, D., Kociumaka, T.: Resolution of the burrows-wheeler transform conjecture. In: Proc. 61st {IEEE} Annual Symposium on Foundations of Computer Science (FOCS). pp. 1002--1013 (2020)

\bibitem{KNO23}
Kociumaka, T., Navarro, G., Olivares, F.: Near-optimal search time in $\delta$-optimal space, and vice versa. CoRR  \textbf{2206.00781} (2023)

\bibitem{KNP22}
Kociumaka, T., Navarro, G., Prezza, N.: Towards a definitive compressibility measure for repetitive sequences. IEEE Transactions on Information Theory  \textbf{69}(4),  2074--2092 (2023)

\bibitem{KN13}
Kreft, S., Navarro, G.: On compressing and indexing repetitive sequences. Theoretical Computer Science  \textbf{483},  115--133 (2013)

\bibitem{LZ76}
Lempel, A., Ziv, J.: On the complexity of finite sequences. IEEE Transactions on Information Theory  \textbf{22}(1),  75--81 (1976)

\bibitem{Lindenmayer1968a}
Lindenmayer, A.: Mathematical models for cellular interactions in development {I}. {F}ilaments with one-sided inputs. Journal of Theoretical Biology  \textbf{18}(3),  280--299 (1968)

\bibitem{Lindenmayer1968b}
Lindenmayer, A.: Mathematical models for cellular interactions in development {II}. {S}imple and branching filaments with two-sided inputs. Journal of Theoretical Biology  \textbf{18}(3),  300--315 (1968)

\bibitem{MRS2003}
Mantaci, S., Restivo, A., Sciortino, M.: {B}urrows–{W}heeler transform and {S}turmian words. Information Processing Letters  \textbf{86}(5),  241--246 (2003)

\bibitem{Navacmcs20.3}
Navarro, G.: Indexing highly repetitive string collections, part {I}: Repetitiveness measures. ACM Computing Surveys  \textbf{54}(2),  article 29 (2021)

\bibitem{Navacmcs20.2}
Navarro, G.: Indexing highly repetitive string collections, part {II}: Compressed indexes. ACM Computing Surveys  \textbf{54}(2),  article 26 (2021)

\bibitem{NOPtit20}
Navarro, G., Ochoa, C., Prezza, N.: On the approximation ratio of ordered parsings. IEEE Transactions on Information Theory  \textbf{67}(2),  1008--1026 (2021)

\bibitem{NU21}
Navarro, G., Urbina, C.: On stricter reachable repetitiveness measures. In: Proc. 28th International Symposium on String Processing and Information Retrieval (SPIRE). pp. 193--206 (2021)

\bibitem{NU23}
Navarro, G., Urbina, C.: L-systems for measuring repetitiveness. In: Proc. 34th Annual Symposium on Combinatorial Pattern Matching (CPM). p. article 14 (2023)

\bibitem{NOU2022}
Navarro, G., Olivares, F., Urbina, C.: Balancing run-length straight-line programs. In: Proc. 29th International Symposium on String Processing and Information Retrieval (SPIRE). pp. 117--131 (2022)

\bibitem{RRRS13}
Raskhodnikova, S., Ron, D., Rubinfeld, R., Smith, A.D.: Sublinear algorithms for approximating string compressibility. Algorithmica  \textbf{65}(3),  685--709 (2013)

\bibitem{Sipser2012}
Sipser, M.: Introduction to the Theory of Computation. Introduction to the Theory of Computation, Cengage Learning (2012)

\bibitem{SS82}
Storer, J.A., Szymanski, T.G.: Data compression via textual substitution. Journal of the {ACM}  \textbf{29}(4),  928--951 (1982)

\end{thebibliography}

\end{document}